\tikzset{snake it/.style={decorate, decoration=snake}}
\newtheorem{theorem}{Theorem}
\newtheorem{proposition}[theorem]{Proposition}
\newtheorem{lemma}[theorem]{Lemma}
\newtheorem{claim}[theorem]{Claim}
\newtheorem{remark}[theorem]{Remark}
\newtheorem*{problem*}{Problem}
\numberwithin{theorem}{section}
\theoremstyle{definition}
\newcommand{\comment}[1]{}
\newcommand{\OO}{\mathcal{O}}
\newcommand{\Aa}{\mathcal{A}}
\newcommand{\MM}{\mathcal{M}}
\newcommand{\ACZ}{\textnormal{\textsf{AC$^0$}}\xspace}
\newcommand{\DET}{\textnormal{\textsf{DET}}\xspace}
\newcommand{\PSPACE}{\textnormal{\textsf{PSPACE}}\xspace}
\newcommand{\NL}
{\textnormal{\textsf{NL}}\xspace}
\newcommand{\coNL}
{\textnormal{\textsf{coNL}}\xspace}
\newcommand{\NCT}
{\textnormal{\textsf{NC}$^2$}\xspace}
\newcommand{\LL}{\textnormal{\textsf{L}}\xspace}
\newcommand{\PTIME}{\textnormal{\textsf{P}\xspace}}
\title{The complexity of reachability problems \\ in strongly connected finite automata}
\author{Stefan Kiefer$^1$, Andrew Ryzhikov$^2$}
\date{$^1$ Department of Computer Science, University of Oxford, UK \\ $^2$ University of Warsaw, Poland}
\begin{document}

\maketitle

\begin{abstract}
Several reachability problems in finite automata, such as completeness of NFAs and synchronisation of total DFAs, correspond to fundamental properties of sets of nonnegative matrices. In particular, the two mentioned properties correspond to matrix mortality and ergodicity, which ask whether there exists a product of the input matrices that is equal to, respectively, the zero matrix and a matrix with a column of strictly positive entries only. The case where the input automaton is strongly connected (that is, the corresponding set of nonnegative matrices is irreducible) frequently appears in applications and often admits better properties than the general case. In this paper, we address the existence of such properties from the computational complexity point of view, and develop a versatile technique to show that several \NL-complete problems remain \NL-complete in the strongly connected case. In particular, we show that deciding if a binary total DFA is synchronising is~\NL-complete even if it is promised to be strongly connected, and that deciding completeness of a binary unambiguous NFA with very limited nondeterminism is~\NL-complete under the same promise.

\vspace{\baselineskip}
    
\textbf{Keywords: }{unambiguous automata, nonnegative matrices, irreducible matrix sets, strongly connected automata, matrix monoids, mortality, completeness, synchronisation, ergodicity.}
    \end{abstract}

\section{Introduction}\label{sec:intro-new}
A square nonnegative matrix is called irreducible if permuting its rows and columns cannot result in a matrix of the shape $\begin{pmatrix}A & B \\ 0 & C\end{pmatrix}$, where $A$ and $C$ are square matrices.
In the theory of nonnegative matrices, irreducibility plays a special role. 
For example, for an irreducible nonnegative matrix, the Perron-Frobenius theory guarantees the existence of a strictly positive eigenvector, while for a general nonnegative matrix it only guarantees the existence of a nonnegative one~\cite{Minc1988}, which has important applications in the algebraic theory of digraphs~\cite{Brualdi2010}. As another example, for irreducible nonnegative matrices there exists an easy characterisation of primitive matrices (matrices having a power with only strictly positive entries) in terms of a partition of the set of rows~\cite[Theorem 10.5.1]{BangJensen2008}. As a consequence, primitivity of irreducible nonnegative matrices is~\LL-complete, whereas primitivity of general nonnegative matrices is \NL-complete~\cite{Kiefer2026}.

A finite set of nonnegative matrices is called irreducible if the sum of all its matrices is irreducible. Similarly to the case of one matrix, irreducible sets are much more well-behaved, both in terms of their algorithmic properties and structural characterisations. For example, given a set of nonnegative matrices such that each row contains a single entry equal to~1 and all other entries are zero, deciding if there exists a product of these matrices having (linear) rank one is \PSPACE-complete in general~\cite{Berlinkov2014}, but in the irreducible case it admits an easy characterisation in terms of pairs of rows, and thus a polynomial-time algorithm~\cite{Berlinkov2021}. Further examples of characterisations that are specific to the irreducible case can be found, for example, in~\cite{Protasov2012,Protasov2013,Wu2023}.
Moreover, in several applications, for example in the case of matrix semigroups associated to regular codes~\cite{Berstel2010} and in testing certain classes of reactive systems~\cite{Schramme2016}, sets of nonnegative matrices are guaranteed to be irreducible, hence stronger results immediately apply. 

In this paper we study the consequences of irreducibility on some fundamental properties of sets of nonnegative matrices and finite automata. More concretely, we take a computational complexity view in exploring to what extent irreducibility leads to more efficient characterisations of mortality, ergodicity and finiteness. For finite automata, these three properties correspond to being incomplete, synchronising and unambiguous respectively.
In the next two sections, we formally introduce the properties studied in this paper and provide very simple examples for when a drop in complexity does and does not happen. We then discuss our main contributions and the outline of the paper.

\section{Motivation and initial examples}\label{sec:motivation}

\subsection{Properties of finite automata and matrix sets} 

\subparagraph*{From sets of nonnegative matrices to (semi-)automata.}
Let $\MM = \{A_1, \ldots, A_m\}$ be a set of $n \times n$ matrices with nonnegative entries (called \emph{nonnegative matrices}). \emph{The monoid generated by~$\MM$} is the set of all products of matrices from $\MM$. Many combinatorial properties of this monoid (for example, completeness and ergodicity defined below) can be conveniently formulated in terms of reachability properties of the associated nondeterministic finite \mbox{(semi-)automaton} (NFA) $\Aa$. Given $\MM$, the NFA $\Aa = (Q, \Sigma, \Delta)$ is constructed as follows. The set $Q$ of states is $\{1, \ldots, n\}$. For each matrix~$A_i \in \MM$, $1 \le i \le m$, the alphabet~$\Sigma$ contains a corresponding letter $a_i$. The transition relation $\Delta \subseteq Q \times \Sigma \times Q$ is defined as follows: $(i, a_k, j) \in \Delta$ if and only if the entry $(i, j)$ in $A_k$ is strictly positive. An example of this construction is illustrated in~\Cref{fig:diamond} (left and centre). We emphasise that our definition of an NFA does not include any initial or final states, and thus simply represents an edge-labelled directed graph, similarly to those used in symbolic dynamics.

\begin{figure}[ht]\centering
\begin{subfigure}[c]{0.30\textwidth} \[\Biggl\{\begin{pmatrix}
    2 & 1 & 0 \\
    0 & 0 & 0 \\
    4 & 0 & 0
\end{pmatrix},
\begin{pmatrix}
    0 & 0 & 1 \\
    0 & 0 & 7 \\
    0 & 0 & 0
\end{pmatrix}\Biggl\}\]
\end{subfigure}
\quad
\begin{subfigure}[c]{0.59\textwidth}\centering
\begin{tikzpicture} [node distance = 2cm]
\tikzset{every state/.style={inner sep=1pt,minimum size=1.5em}}

\node [state] at (0, 0) (1) {1};
\node [state] at (1.5, 1.5) (2) {2};
\node [state] at (3, 0) (3) {3};

\path [-stealth, thick]
(1) edge [] node[above] {$a$} (2)
(2) edge [] node[above] {$b$} (3)
(1) edge [loop below] node[below] {$a$} (1)
(1) edge [] node[above] {$b$} (3)

;

\path [-stealth, thick]
(3) edge [bend left=30] node[below] {$a$} (1)
;
\end{tikzpicture}
\quad
\begin{tikzpicture} [node distance = 2cm]
\tikzset{every state/.style={inner sep=1pt,minimum size=1.5em}}

\node [state] at (-1.75, 0) (p) {$p$};
\node [state] at (0, 0.75) (t1) {$t_1$};
\node [state] at (1.75, 0) (q) {$q$};
\node [state] at (0, -0.75) (t2) {$t_2$};
\path [-stealth, thick]

(p) edge [decorate, decoration={snake, segment length=3mm, amplitude=0.5mm}] node[above] {$w_1$} (t1)
(t1) edge [decorate, decoration={snake, segment length=3mm, amplitude=0.5mm}] node[above] {$w_2$} (q)

(p) edge [decorate, decoration={snake, segment length=3mm, amplitude=0.5mm}] node[below] {$w_1$} (t2)
(t2) edge [decorate, decoration={snake, segment length=3mm, amplitude=0.5mm}] node[below] {$w_2$} (q)
;
\end{tikzpicture}
\end{subfigure}
\caption{A set of two nonnegative matrices (left), the corresponding NFA (centre) and an illustration of a diamond (right). By taking $p = 1$, $w_1 = a$, $w_2 = b$ and $q = 3$, one can see a diamond in the NFA.}\label{fig:diamond}
\end{figure}
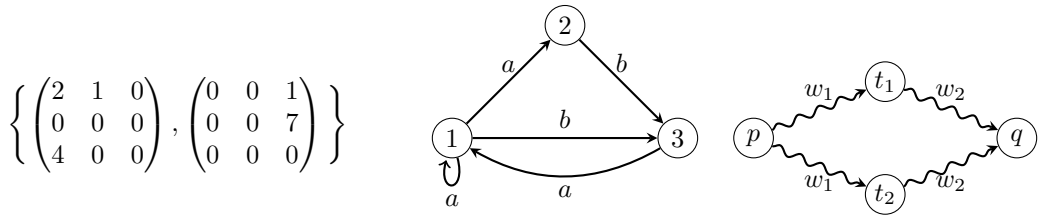

An NFA $\Aa = (Q, \Sigma, \Delta)$ is a \emph{deterministic finite (semi-)automaton (DFA)} if for every state $p \in Q$ and every letter $a \in \Sigma$ there is at most one outgoing transition, that is, there is at most one state $q$ with $(p, a, q) \in \Delta$.  If there is exactly one such state $p$ for every $q$ and $a$, a DFA is called \emph{total}.

There are straightforward connections between properties of the matrix set $\MM$ and the corresponding NFA~$\Aa$:
\begin{itemize}
    \item An NFA is called \emph{complete} if every word in $\Sigma^*$ labels a path in $\Aa$. $\Aa$ is complete if and only if the monoid generated by $\MM$ does not contain the zero matrix. Deciding the latter property is known as the the matrix mortality problem~\cite{Paterson1970}, and is a subject of active research, see e.g.~\cite{Cassaigne2014,Bell2021,Kiefer2021,Ryzhikov2024RP}. If one considers a linear dynamical system where a matrix from $\MM$ is applied at every step, mortality simply means that there is a scenario where the system is mapped to the zero configuration regardless of where it started from.

    \item An NFA is called \emph{strongly connected} if its underlying digraph (where we forget the labels of the transitions) is strongly connected. NFA $\Aa$ is strongly connected if and only if $\MM$ is irreducible, that is, if for every pair $(i, j)$, $1 \le i, j \le n$, the monoid generated by $\MM$ contains a matrix whose entry $(i, j)$ is strictly positive. 

    \item $\Aa$ is a DFA if and only if every matrix in the set $\MM$ contains at most one strictly positive entry in each row.

    \item A total DFA is called \emph{synchronising} if there exists a word mapping every state to the same state. Total DFA $\Aa$ is synchronising if and only if the monoid generated by $\MM$ contains a matrix with a column of strictly positive entries only ($\MM$ is sometimes called ergodic in this case~\cite{Wu2023}). Intuitively, synchronisation means that a total DFA admits a word that ``resets'' it to a particular state regardless of its current state.

\end{itemize}

\subparagraph*{Characterisations of properties and relations with~\NL-hardness.} 
As mentioned in the introduction, irreducibility is particularly important. 
In some applications (such as deciding if a regular code is complete or synchronising~\cite{Berstel2010,Berlinkov2021}, see \Cref{sec:codes} for further discussion), the constructed NFA is always strongly connected. 
Besides that, deciding many other reachability properties (such as completeness or unambiguity of NFAs) often reduces to the analysis of strongly connected components of an NFA. Hence, to understand such properties better, it is useful to consider them specifically for strongly connected NFAs. Since deciding strong connectivity is \NL-complete~\cite[Problem~8.27]{Sipser2013}, this is especially relevant for properties that can be decided in~\NL. In some cases, restricting to strongly connected NFAs results in a drop of complexity, indicating that there might be a better characterisation of the property for strongly connected NFAs in comparison to general NFAs. In other cases, the complexity remains the same, showing that the property in question already expresses strong connectivity in an implicit way and every characterisation of it must also be \NL-hard to decide. 

\subsection{Two simple examples} \label{subs:two-examples}

Let us provide a very simple illustration of both scenarios: where the complexity drops in the strongly connected case, and where it remains the same but requires a more involved proof. Both examples involve well-known problems for DFAs, however the simple proofs that we present in this section seem to be new.

One of the most fundamental \NL-complete problems is the $(s, t)$-reachability problem in digraphs \cite[Theorem~8.25]{Sipser2013}. Given a directed graph $G = (V, E)$ and two vertices $s, t \in V$, this problem asks if there exists an $(s, t)$-path in $G$ (that is, a path starting in $s$ and ending in $t$). It is well-known that this problem is \NL-complete for acyclic digraphs where all vertices have outdegree at most two (we recall the techniques for proving that in the proof of~\Cref{lem:st-reach-ass}), so we assume that $G$ is such a digraph. We can also assume that vertices of indegree zero have outdegree one, and that $t$ has outdegree zero.

\subparagraph*{Completeness of DFAs.} 
Consider first the problem of deciding if a given DFA is complete. It is \NL-hard by the following simple reduction from $(s, t)$-reachability in digraphs. Let $G = (V, E)$ and $s, t \in V$ be its input.
The reduction works as follows. Merge all vertices of outdegree zero except for~$t$, and denote the resulting vertex~$t'$ (since $G$ is acyclic, if there are no such vertices, then an $(s, t)$-path always exists). Add an edge $(t', s)$. For every vertex $s'$ of indegree zero except for~$s$, add an edge $(s', t)$. Label all edges of the resulting digraph with letters $a$ and $b$ in such a way that every vertex different to~$t$ has exactly one outgoing edge labelled by each of $a$ and $b$ (duplicate the edges where necessary). This construction is illustrated in \Cref{fig:intro-nl-constr}. 

\begin{figure}[ht]\centering
\begin{tikzpicture} [node distance = 2cm,baseline=(s)]
\tikzset{every state/.style={inner sep=1pt,minimum size=1.5em}}

\node [state] at (-1.5, 1) (s) {$s$};
\node [state] at (-1.5, -0.5) (2) {};

\node [state] at (0, 1.5) (1) {};

\node [state] at (1.5, 1.5) (3) {};

\node [state] at (3, 1) (t) {$t$};
\node [state] at (3, 0) (4) {};

\path [-stealth, thick]
(s) edge [] node[above] {} (1)
(1) edge [] node[above] {} (3)
(2) edge [] node[above] {} (4)
(1) edge [] node[above] {} (4)

;

\end{tikzpicture}
\qquad
\begin{tikzpicture} [node distance = 2cm,baseline=(s)]
\tikzset{every state/.style={inner sep=1pt,minimum size=1.5em}}

\node [state] at (-1.5, 1) (s) {$s$};
\node [state] at (-1.5, -0.5) (2) {};

\node [state] at (0, 1.5) (1) {};

\node [state] at (3, 1) (t) {$t$};
\node [state] at (3, 0) (4) {$t'$};

\path [-stealth, thick]
(s) edge [] node[above] {} (1)
(s) edge [dashed] node[above] {} (4)
(2) edge [] node[above] {} (t)
(2) edge [dashed] node[above] {} (4)

(1) edge [bend left=20] node[above] {} (4)
(1) edge [bend left=30, dashed] node[above] {} (4)

(4) edge [bend left=40] node[above] {} (s)
(4) edge [bend left=50, dashed] node[above] {} (s)

;

\end{tikzpicture}
\caption{The input digraph (left) and the resulting DFA in the reduction (right). Transitions by $a$ are represented by solid arrows, and transitions by $b$ by dashed lines.}\label{fig:intro-nl-constr}
\end{figure}
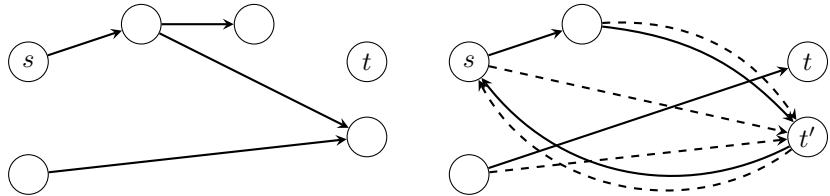

Put a token on every state of the resulting DFA and move them according to a word $w$ that is read letter by letter. If, upon reading a letter~$x \in \{a, b\}$, there is a token on a state with no outgoing transition from this state labelled by $x$, remove the token. By definition, a DFA is not complete if and only if there is a word removing all the tokens.
Observe that $t$~is the only state where tokens can be removed. If every state can be mapped to~$t$, then we can subsequently move the tokens to~$t$ one by one and remove them. Conversely, if there is a state that cannot be mapped to $t$ by any word, the DFA is complete.
It remains to note that by construction every state can be mapped to either $t$ or $t'$, and $t'$ can be mapped to~$s$. Hence, every state can be mapped to $t$ if and only if there is an $(s, t)$-path in $G$, which concludes the proof.

It is easy to see that deciding if a DFA is incomplete is in~\NL (by checking if every state can be mapped to a state where a token can be removed). Since $\NL=\coNL$~\cite[Section~8.6]{Sipser2013}, deciding if a DFA is complete is \NL-complete.

Intuitively, the reason why the described construction for showing \NL-hardness works is that it guarantees the following two conditions on the reachability relation between states. On the one hand, state $t$ is ``special'' and serves as a trivial state-to-state reachability obstacle: the only reason for the DFA to be complete is that $t$ cannot be reached from a subset of states. On the other hand, the states in this subset can be made pairwise reachable, hence any of them can be taken as the source state. 
This allows us to reduce deciding state-to-state reachability to deciding incompleteness, a reachability property on sets of states.
This is no longer possible if the input DFA is promised to be strongly connected. Indeed, deciding if a strongly connected DFA is complete is in~$\ACZ$, a class significantly below \NL: one simply needs to check if each state has an outgoing transition for every letter. Informally, this is due to the fact that we can no longer express a complex enough pairwise reachability relation.

\subparagraph*{Synchronisation of total DFAs.} 
Consider now the problem of deciding if a total DFA is synchronising. Virtually the same construction as above shows that it is \NL-hard. Indeed, just add self-loops labelled by~$a$ and $b$ to~$t$. By the same argument as above, the resulting total DFA is synchronising if and only if there exists an $(s, t)$-path in $G$. 
Previously, \NL-hardness was already shown in \cite[Lemma 14]{Holzer2018} and \cite[Proposition 2.2]{Volkov2022}, but our construction provides a simpler proof of this fact.

It is well-known that a total DFA is synchronising if and only if for every pair of states there exists a word mapping them to the same state~\cite{Volkov2008}, which also implies that deciding if a total DFA is synchronising is in~\NL. Synchronisability can thus be seen as a reachability condition on pairs of states, and, similarly to completeness, the described construction turns synchronisability into a digraph reachability problem. In particular, the total DFAs constructed for no-instances of $(s, t)$-reachability both in our proof and in the proofs in \cite{Holzer2018,Volkov2022} are not synchronising for trivial reachability reasons: the strongly connected component of~$s$ and the strongly connected component of~$t$ cannot be reached from each other and there is no strongly connected component that they both can reach.

One could expect that, similarly to completeness, deciding whether a total DFA is synchronising becomes easier if it is promised to be strongly connected. Moreover, observe that if a total DFA is synchronising, the period (that is, the greatest common divisor of the lengths of the cycles) of its underlying digraph must be equal to one. In~\cite{Kiefer2026}, it is shown that deciding if the period of a digraph is equal to one is~\NL-complete in general, but becomes~\LL-complete if the digraph is promised to be strongly connected.  
One might hope that synchronisability of total DFAs has a similarly nice and efficient characterisation once strong connectedness is assumed.

However, in~\Cref{sec:nl-sync} we show that deciding synchronisability remains \NL-complete for strongly connected total DFAs. As already mentioned when discussing completeness, the strongly connected case no longer allows for simple reachability obstacles, and makes reachability properties more ``symmetric'' and ``uniform'' between different states and sets of states. Hence, it requires developing a significantly more involved proof technique than the one described above.
It turns out that this proof technique, which is the main technical contribution of the paper, is quite powerful and versatile: it allows us to show that several other fundamental reachability problems are \NL-hard for strongly connected NFAs, even in very restricted settings. We then use these results and their proofs as a guidance for identifying structurally interesting classes of NFAs (and thus of sets of nonnegative matrices).
In the next section we provide the remaining definitions, explain existing results and outline our contributions.

\section{Existing results and our contributions}\label{sec:existing}

The remainder of the paper is formulated in terms of NFAs, and we refer to sets of nonnegative matrices only when explaining applications and equivalent formulations for them. For the sake of convenience, this section also repeats the definitions for NFAs that were introduced in the previous section.

\subparagraph*{NFAs and DFAs.}
An NFA $\Aa = (Q, \Sigma, \Delta)$ consists of a finite set $Q$ of states, a finite alphabet~$\Sigma$ and a transition relation $\Delta \subseteq Q \times \Sigma \times Q$. If $|\Sigma| = 2$, $\Aa$ is called \emph{binary}. The relation $\Delta$ is naturally extended to a relation on $Q \times \Sigma^* \times Q$, which we also denote as $\Delta$. For a state $q \in Q$ and a word $w \in \Sigma^*$, we denote by $q \cdot w$ the image of $q$ by $w$, that is, the set $\{p \mid (q, w, p) \in \Delta\}$. We say that $w$ \emph{maps} $q$ to $q \cdot w$. If $q \cdot w = \emptyset$, we say that $w$ \emph{kills} $q$.
An NFA is called \emph{strongly connected} if for all $p,q \in Q$ there is $w \in \Sigma^*$ with $p \cdot w \ni q$. 

An NFA $\Aa = (Q, \Sigma, \Delta)$ is a \emph{deterministic finite (semi-)automaton (DFA)} if for every state $q \in Q$ and every letter $x \in \Sigma$ we have $|q \cdot x| \le 1$. In this case, we use a partial transition function $\delta \colon Q \times \Sigma \rightharpoonup Q$ instead of $\Delta$. It is then also naturally extended to $Q \times \Sigma^* \rightharpoonup Q$, which we still denote as $\delta$. If $\delta$ is a total function, that is, if it is defined for every pair $q \in Q$, $x \in \Sigma$, the DFA is called \emph{total}. Usually such DFAs are called complete, but we prefer the term ``total'' both to avoid the clash of terminology and because of the obvious connection with total functions. Note that every total DFA is complete, but a complete DFA does not have to be total if it is not strongly connected. A strongly connected DFA is complete if and only if it is total. 
Similarly to NFAs, for a state $q \in Q$ and a word $w \in \Sigma$ of a total DFA, we denote by $q \cdot w$ the state $\delta(q, w)$.

\subparagraph*{Synchronising total DFAs.}
A total DFA is called \emph{synchronising} if there exists a word mapping every state to the same state. Synchronisation of total DFAs is an actively investigated topic in automata theory~\cite{Volkov2008,Kari2019,Volkov2022}, transformation semigroups~\cite{Salomaa2002,Salomaa2003}, combinatorial matrix theory~\cite{Gerencser2018,Wu2023}, coding theory~\cite{Berstel2010,Berlinkov2021} and group theory~\cite{Araujo2017}. It is the subject of the \v{C}ern\'{y} conjecture, one of the oldest open problems in combinatorial automata theory, see~\cite{Kari2019,Volkov2022} for a survey. It is known that if the \v{C}ern\'{y} conjecture is true for strongly connected total DFAs, then it is true for all total DFAs~\cite[Section 3.3]{Volkov2022}.

\subparagraph*{Unambiguous NFAs and diamonds.}
A \emph{diamond} in an NFA $\Aa = (Q, \Sigma, \Delta)$ is a pair of states $p, q \in Q$ and a word $w \in \Sigma^*$ such that $w$ labels two different paths from $p$ to $q$ (see e.g.~\cite{Baier2023}). More formally, states $p, q$ and a word $w$ form a diamond if $w$ can be represented as $w = w_1w_2$
such that there exist states $t_1 \ne t_2$ with $p \cdot w_1 \ni t_1, t_1 \cdot w_2 \ni q$ and $p \cdot w_1 \ni t_2, t_2 \cdot w_2 \ni q$, see \Cref{fig:diamond} (right) for an illustration. An NFA is called \emph{unambiguous} (or sometimes \emph{diamond-free}) if it does not contain any diamonds. Clearly, every DFA is an unambiguous~NFA.

Deciding if a given NFA is unambiguous is \NL-complete, even if the NFA is promised to be strongly connected (see subsection ``Applications to variable-length codes'' of Section~2 in~\cite{Kiefer2025}). In~\cite{Drabik2025}, the fine-grained time complexity of deciding if an NFA is unambiguous is studied, but the notion of NFAs there includes initial and final states and thus slightly differs from ours. However, for strongly connected NFAs, it coincides with ours if one takes an arbitrary state to be the only initial and final state.

Let $\MM$ be a set of nonnegative matrices, and $\Aa$ be the corresponding NFA. The fact that $\Aa$ is unambiguous has the following interpretation in terms of matrices: for any two matrices $M_1, M_2$ from the monoid generated by $\MM$, in the process of computing the product of $M_1$ and $M_2$ no two strictly positive entries are summed up. In particular, if $\MM$ is an irreducible set of matrices whose entries are only zero and one, $\Aa$ is unambiguous if and only if $\MM$ generates a finite monoid. Finite monoids of nonnegative matrices and corresponding automata are studied in~\cite{Weber1991,Weber1991Mat, Kiefer2021,Kiefer2025}.

\subparagraph*{Completeness.}
An NFA $\Aa = (Q, \Sigma, \Delta)$ is \emph{complete} if for every word $w \in \Sigma^*$, there exists a state $q \in Q$ such that $q \cdot w \ne \emptyset$. A word violating this property is called \emph{mortal}, and an NFA admitting a mortal word is called \emph{incomplete}.
The problem of deciding if an NFA is complete is \PSPACE-complete, even for binary strongly connected NFAs~\cite{Kao2009}. However, for unambiguous NFAs it becomes solvable in polynomial time, more precisely in \DET~\cite{Kiefer2021}. To the best of our knowledge, no  lower bounds have been previously shown for the complexity of deciding if an unambiguous NFA is complete. 

Besides the fact that completeness of NFAs is a decidable special case of matrix mortality (as discussed in the previous section), it is also important in language theory and symbolic dynamics, where it can be seen as the fact that every word belongs to the set of factors of a language (called factor language universality)~\cite{Mika2021,Rampersad2012,Kao2009,Gawrychowski2020}. This is especially important in the theory of codes, as we discuss in \Cref{sec:codes}.

\subparagraph*{A remark about complexity classes.}
In this paper, we refer to the classes 
\[\ACZ \subseteq \LL \subseteq \NL = \coNL \subseteq \DET \subseteq \NCT  \subseteq \PTIME.\]
We refer to~\cite{Goldreich2008,Santha1998} for formal definitions, and only provide a brief description and intuition. \ACZ is the class of problems solvable by constant-depth and polynomial-size Boolean circuits. It is a class that lies significantly lower than the other classes that we consider, and in the context of this paper it is synonymous to properties that are extremely easy to decide. In particular, all reductions in this paper are \ACZ reductions. \LL and \NL are the classical classes of problems solvable in deterministic and nondeterministic logarithmic space. We will often use the fact that \NL is closed under complementation, that is, $\NL = \coNL$~\cite[Section~8.6]{Sipser2013}. \NCT is the class of problems solvable by $\OO((\log n)^2)$-depth polynomial-size bounded fan-in Boolean circuits. Finally, \DET is the class of problems that are reducible to verifying the value of the determinant of a matrix with integer entries~\cite{Santha1998}. Intuitively, this class represents problems that are solvable by ``elementary'' linear-algebraic algorithms.

\subsection*{Our contributions} 

In~\Cref{sec:constr-reach} we introduce a variant of the $(s, t)$-reachability problem in digraphs which is crucial for all our constructions and might be of independent interest. Its key property is that it gives a lot of control over possible $(s, t)$-paths. Using this variant, we then prove in~\Cref{sec:nl-sync} that deciding if a strongly connected total DFA is synchronising is \NL-complete. As discussed above, previously \NL-completeness was only known without the promise of strong connectivity, and this promise requires a significantly more advanced approach.

We then further extend our approach in~\Cref{sec:comp-and-unamb} by showing two complementary statements: given a strongly connected NFA, it is \NL-hard to decide if it is complete even if it is unambiguous (\Cref{thm:nl-complete}), and it is \NL-hard to decide if it is unambiguous even if it is complete (\Cref{thm:nl-unamb}). As mentioned above, no lower bounds on deciding completeness of an unambiguous NFA were known, and deciding unambiguity of strongly connected NFAs was known to be \NL-complete only without the promise that the NFA is complete.

In~\Cref{sec:2-im-b}, we observe that the NFAs in the \NL-hardness reductions from~\Cref{sec:comp-and-unamb} belong to the class that we call $2$-image-bounded. In matrix terms, this means that every matrix in the monoid contains at most two strictly positive entries in each row. It can also be seen as a natural generalisation of DFAs, since for DFAs the matrices have at most one strictly positive entry per row.
Given such a natural definition, it is quite surprising that we were not able to find any results about this class in the literature. 

In \Cref{sec:codes}, we discuss the consequences of our results in the theory of codes, a natural source of problems for strongly connected automata. Finally, \Cref{sec:conclusions} presents conclusions and open problems.

\section{Constrained \texorpdfstring{$(s, t)$}{(s, t)}-reachability in digraphs}\label{sec:constr-reach}
Our technique relies on the following lemma. Its statement is quite technical, but the main idea is that, for the length of a shortest $(s, t)$-path in a digraph, deciding between the options that this length is $n - 1$ or $n$ is \NL-hard. This gives us a lot more control over reachability properties of digraphs compared to the classical $(s, t)$-reachability problem.

\begin{lemma} \label{lem:st-reach-ass}
The following problem is \NL-complete: given a digraph $G$, two of its vertices~$s, t$ and a number~$n$ with the promises that
\begin{itemize}
    \item $G$ is acyclic, 

    \item $t$ has outdegree zero,

    \item all other vertices have outdegree at most two, 

    \item every $(s, t)$-path in $G$ has length $n - 1$ or $n$,

    \item and for every vertex $v$ there is a $(v, t)$-path in $G$,
\end{itemize} 
decide if there exists an $(s, t)$-path of length~$n - 1$ in~$G$. 
\end{lemma}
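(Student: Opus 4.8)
First I would dispatch the easy direction. The restricted problem is a special case of a reachability-style question, so I would argue membership in \NL directly: guess an $(s',t')$-path vertex by vertex, maintaining a counter of its length, and accept if and only if the counter reaches exactly $n-1$. Since $G'$ is acyclic and every vertex reaches $t'$, this is a standard logspace-nondeterministic procedure; the counter needs only $\OO(\log n)$ bits. (Given the promise, it would be equivalent to check whether the shortest $(s',t')$-path has length $n-1$, but the guess-and-count formulation avoids computing shortest paths.)

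**\NL-hardness: the reduction.** The substance is hardness, and I would reduce from plain $(s,t)$-reachability in an acyclic digraph of outdegree at most two — the standard \NL-complete problem alluded to in the text (and which the paper says it will recall in this very proof). Let $G$, $s$, $t$ be such an instance with $N$ vertices. The plan is to build $G'$ so that \emph{every} $(s',t')$-path has length $N-1$ or $N$, with the length-$(N-1)$ option available exactly when $t$ is reachable from $s$ in $G$. The natural construction is a \emph{layered} one: create $N$ layers $L_0,\dots,L_{N-1}$, put a copy $(v,i)$ of each vertex $v$ in layer $L_i$, let $s' = (s,0)$, and draw an edge from $(u,i)$ to $(v,i+1)$ whenever either $(u,v)\in E(G)$ or $u=v$ (the ``stay'' edges padding shorter paths). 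Then a walk of length $k\le N-1$ from $s$ in $G$ corresponds to a path from $(s,0)$ into layer $L_k$. Collapse all copies of $t$, namely $(t,0),\dots,(t,N-1)$, together with a fresh sink $t'$: concretely, route every $(t,i)$ straight to $t'$. To enforce that non-$t$ maximal paths also have length close to $N$, I would make every vertex copy $(v,i)$ with $v\ne t$ in a non-final layer carry its ``stay'' edge to $(v,i+1)$, so paths can always be prolonged to length $N-1$; and for the copies in the last layer $L_{N-1}$ that are not $t$, add a single edge to $t'$. A short path $s\rightsquigarrow t$ of length $N-1$ or less in $G$ gives an $(s',t')$-path of length $N-1$ (stay at $t$, then step to $t'$), and if $t$ is unreachable, every $(s',t')$-path must traverse all $N-1$ layers before the final edge to $t'$, giving length exactly $N$. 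Outdegree is at most two because each $(u,i)$ has its $G$-successors (at most two) possibly replaced — I would need to be slightly careful: a vertex of outdegree two in $G$ plus a ``stay'' edge would give outdegree three, so I would instead only add the stay edge when $u$ has outdegree $\le 1$ in $G$, or, cleaner, subdivide: insert an intermediate copy so that branching and ``staying'' never happen at the same node. Acyclicity is automatic from the layering; ``every vertex reaches $t'$'' holds because every copy has a path forward through stay-edges (or direct edges) to the last layer and thence to $t'$.

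**The main obstacle.** The delicate point — and the one I would spend the most care on — is simultaneously enforcing \emph{all} the promise conditions: (i) every $(s',t')$-path has length $n-1$ or $n$ for a single fixed $n$, (ii) outdegree at most two everywhere, (iii) $t'$ is the unique outdegree-zero vertex, and (iv) every vertex reaches $t'$. Conditions (i) and (ii) pull against each other: padding short paths up to length $n-1$ wants extra ``stay'' edges, but those inflate outdegree. The fix is the subdivision trick — replace each original vertex by a tiny gadget (a ``branch'' node and a ``forward'' node) so that at most one of {branch into two $G$-successors, take the stay edge} happens at any node — together with choosing $n = N$ or $n=N+1$ after accounting for the gadget's internal edges (this only rescales lengths uniformly, so (i) is preserved). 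I would also double-check the boundary layers: indegree-zero copies other than $s'$ should still reach $t'$ (they do, via stay-edges), and I should make sure no $(s',t')$-path can be \emph{shorter} than $n-1$ — this is why the final edge to $t'$ is only offered from layer $L_{N-1}$ and from the collapsed $t$-copies, never from an early non-$t$ copy. Once the gadget is pinned down, verifying (i)–(iv) and the correctness equivalence ``length-$(n-1)$ path exists $\iff$ $t$ reachable from $s$ in $G$'' is routine.
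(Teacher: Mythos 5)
Your membership-in-\NL argument is fine, but the hardness reduction has a genuine gap, and it sits exactly at the place you flagged as ``the delicate point.'' Routing every copy $(t,i)$ straight to $t'$ destroys the promise that every $(s',t')$-path has length $n-1$ or $n$: if $s$ reaches $t$ in $G$ at distance $k$, then $(s,0)\to\cdots\to(t,k)\to t'$ is an $(s',t')$-path of length $k+1$ (and, if the $t$-copies also keep their stay edges, every length from $k+1$ up to $N$ occurs). For small $k$ this is far below $n-1$. Your stated safeguard --- offering the final edge to $t'$ only from layer $L_{N-1}$ and from the $t$-copies --- is precisely what creates the leak, because the $t$-copies live in \emph{all} layers, not just the last, so they provide early exits to the sink. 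The reduction therefore maps yes-instances of $(s,t)$-reachability to digraphs outside the promise and is not a valid reduction to the stated promise problem.

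The paper's construction fixes exactly this one point: copies of $t$ never receive an early edge to $t'$; a $t$-copy carries only the stay edge $t^{(i)}\to t^{(i+1)}$ plus the single final edge $t^{(n)}\to t'$. Instead, each $G$-edge $(u,t)$ is encoded as a \emph{jump} $u^{(i)}\to t^{(i+2)}$ that advances two layers rather than one. Since from a $t$-copy only single-step edges leave, any $(s',t')$-path uses at most one jump; hence its length is $n$ or $n-1$, with $n-1$ achievable if and only if some $(s,t)$-path exists in $G$. That is the missing mechanism that makes conditions (i)--(iv) hold simultaneously. A secondary observation: the subdivision gadget you sketched for outdegree control is not needed --- the paper adds stay edges only at vertices of outdegree zero in $G$, so a stay edge never competes with two $G$-successors, and outdegree at most two falls out for free.
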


When using this lemma in the remainder of the paper, we will assume that 
the outdegree of each vertex except for $t$ is exactly two, which can be achieved by simply duplicating edges. Formally, this results in a directed multigraph (where edges form a multiset instead of a set), which will not cause any issues in the paper.

\begin{proof}
We provide an \ACZ reduction from the $(s,t)$-reachability problem.
Let $G = (V, E)$ and $s, t \in V$ be the input of this problem. The idea is to construct a digraph $G'$ with $|V| + 1$ layers of states, each containing a copy of each state from $V$. The edges in $G'$ are defined in such a way that, for layer $i$, taking an edge $(u, v)$ in $G$ corresponds to going from the copy of~$u$ in layer $i$ to the copy of $v$ in layer $i + 1$ if $v \ne t$, and to the copy of $v$ in layer $i + 2$ if $v = t$. In the last layer, all vertices are merged, resulting in just one vertex, and this vertex can be reached from the copy of $s$ in the first layer by a path of length $|V| - 1$ if and only if there is an $(s, t)$-path in $G$. By taking $n = |V|$ and renaming $G'$, $s', t'$ as $G$, $s, t$ respectively, we get that the problem in the statement of the lemma is \NL-complete.

We now provide the details of this construction.
We can assume that the outdegree of each vertex in $G$ is at most two. Indeed, if the outdegree of a vertex $u$ is larger than two, we can replace all edges outgoing from it with a directed binary tree whose root is $u$ and whose leaves are the vertices $v \in V$ such that $(u, v) \in E$.  
Clearly, this operation does not affect the existence of a path from $s$ to~$t$. 
To obtain a reduction to the problem in the statement, we take $n = |V|$ and create a new digraph $G' = (V', E')$ whose state set $V'$ consists of $n$ copies of $V$ and a new state $t'$. The set $E'$ of edges consists of the following four types:
\begin{enumerate}[(i)]
    \item for each $1 \le i \le n - 1$ and each edge $(u, v) \in E$ with $u,v \ne t$, create an edge going from the $i$th copy of $u$ to the $(i + 1)$st copy of $v$;

    \item for each vertex $u \in V$, create an edge going from the $n$th copy of $u$ to $t'$;

    \item for each $1 \le i \le n - 1$ and for each vertex $v \in V$ of outdegree zero (including in particular $t$), create an edge going from the $i$th copy of $v$ to the $(i + 1)$st copy of $v$;

    \item for each $1 \le i \le n - 1$ and each edge $(u, t) \in E$, create an edge going from the $i$th copy of $u$ to the $(i + 2)$nd copy of $t$ ($t'$ is considered the $(n+1)$st copy of~$t$).
\end{enumerate}

For each vertex $v \in V$, denote the $i$th copy of $v$ as $v^{(i)}$. Take $s' = s^{(1)}$ and $t' = t^{(n+1)}$. See \Cref{fig:st-reachability} for an~example of the described construction.

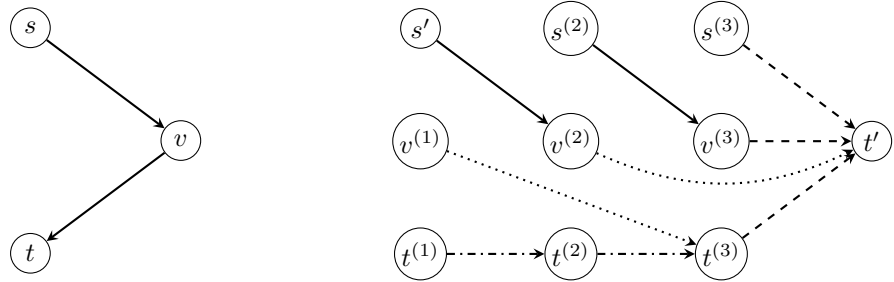
\begin{figure}[ht]\centering
\begin{subfigure}[c]{0.45\textwidth} \centering
\begin{tikzpicture} [node distance = 2cm]
\tikzset{every state/.style={inner sep=1pt,minimum size=1.5em}}

\node [state] at (0, 1) (s) {$s$};
\node [state] at (2, 0) (v) {$v$};
\node [state] at (0, -1) (t) {$t$};
\path [-stealth, thick]

(s) edge [] node[above] {} (v)
(v) edge [] node[above] {} (t)

;
\end{tikzpicture}
\end{subfigure}
\begin{subfigure}[c]{0.45\textwidth} \centering
\begin{tikzpicture} [node distance = 2cm]
\tikzset{every state/.style={inner sep=1pt,minimum size=1.5em}}

\node [state] at (0, 1) (s1) {$s'$};
\node [state] at (0, 0) (v1) {$v^{(1)}$};
\node [state] at (0, -1) (t1) {$t^{(1)}$};

\node [state] at (2, 1) (s2) {$s^{(2)}$};
\node [state] at (2, 0) (v2) {$v^{(2)}$};
\node [state] at (2, -1) (t2) {$t^{(2)}$};

\node [state] at (4, 1) (s3) {$s^{(3)}$};
\node [state] at (4, 0) (v3) {$v^{(3)}$};
\node [state] at (4, -1) (t3) {$t^{(3)}$};

\node [state] at (6, 0) (t) {$t'$};

\path [-stealth, thick]

(s1) edge [] node[above] {} (v2)
(v1) edge [dotted] node[above] {} (t3)

(s2) edge [] node[above] {} (v3)
(v2) edge [dotted, bend right=25] node[above] {} (t)

(t1) edge [dashdotted] node[above] {} (t2)

(t2) edge [dashdotted] node[above] {} (t3)

(s3) edge [dashed] node[above] {} (t)
(v3) edge [dashed] node[above] {} (t)
(t3) edge [dashed] node[above] {} (t)

;
\end{tikzpicture}
\end{subfigure}

\caption{The original digraph $G$ (left) and the digraph $G'$ obtained in the reduction in the proof of \Cref{lem:st-reach-ass} (right). We have $n=3$. Solid edges correspond to making one step forward (type (i)), dashed edges to making the final step (type (ii)), dashdotted edges to progressing from a vertex of outdegree zero (type (iii)), and dotted edges to making two steps forward (type (iv)).}\label{fig:st-reachability}
\end{figure}

Suppose there is an $(s,t)$-path in~$G$, say $s_1 \cdots s_k t$ with $s_1 = s$.
We can assume that~$k < n$.
Then there is an $(s',t')$-path in~$G'$, namely $s_1^{(1)} \cdots s_{k}^{(k)} t^{(k+2)} \cdots t^{(n+1)}$.
Note that $(s_{k}^{(k)}, t^{(k+2)})$ is an edge of type (iv).

Conversely, suppose that $G'$ has an $(s',t')$-path, say~$\rho$, of length at most $n-1$.
Then~$\rho$ uses an edge of type (iv), say $(s_{k}^{(k)},t^{(k+2)})$ for some $(s_k,t) \in E$.
Since $\rho$ cannot use another edge of type~(iv), the length of~$\rho$ is $n-1$.
We are going to show that before the edge $(s_{k}^{(k)},t^{(k+2)})$, the path~$\rho$ can only use edges of type~(i).
Indeed, the source of any edge of type (iii) is of the form $v^{(i)}$, where $v$ has outdegree zero in~$G$; but in~$G'$ the node $v^{(i)}$ can only reach~$t'$ or nodes of the form $v^{(j)}$ with $j \ge i$; in particular, $v^{(i)}$ cannot reach $s_{k}^{(k)}$, as $s_k$~does not have outdegree zero.
Thus, $\rho$ has a prefix $s_1^{(1)} \cdots s_k^{(k)}$ (where $s' = s_1^{(1)}$) that consists entirely of edges of type~(i).
Hence, by construction, $s_1 \cdots s_k t$ is an $(s,t)$-path in~$G$.
\end{proof}

\section{Synchronisation of total DFAs}\label{sec:nl-sync}
\comment{The \emph{rank} of a total DFA $\Aa = (Q, \Sigma, \delta)$ is the minimum size of the set $\{q \cdot w \mid q \in Q\}$ among all words $w \in \Sigma^*$. Clearly, the rank of a total DFA is one if and only if it is synchronising.} The first main result of this paper is as follows.

\begin{theorem}\label{thm:nl-sync}
    Deciding if a binary strongly connected total DFA\comment{of rank at most two} is synchronising is \NL-complete.
\end{theorem}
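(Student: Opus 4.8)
The plan is to reduce from the constrained $(s,t)$-reachability problem of \Cref{lem:st-reach-ass}. Membership in \NL is immediate from the pairwise-synchronisation criterion of \cite{Volkov2008}: a total DFA is synchronising iff every pair of states can be merged by some word, and this can be checked in \NL by guessing merging words on the fly in the pair automaton; strong connectivity is a promise, so we do not need to verify it. The real work is the \NL-hardness construction, which must produce a binary total DFA that is \emph{strongly connected} and that is synchronising if and only if the given instance has an $(s,t)$-path of length $n-1$ (rather than $n$).

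The design I would aim for is a ``layered DFA'' built from the layered digraph $G'$ already produced inside the proof of \Cref{lem:st-reach-ass}. Concretely, I would take the vertex set of $G'$ (roughly $n$ copies of $V$ plus the sink $t'$) as the state set, orient everything so that the two outgoing edges of each non-sink vertex become the $a$- and $b$-transitions, and then add a ``feedback'' letter-behaviour that sends $t'$ back to the first layer so that the automaton becomes strongly connected. The key point is to arrange that, when we run a word, all tokens are pushed forward through the layers in lockstep, and that the only way to lose a distinction between two tokens is to route both of them through the single ``shortcut'' edge of type~(iv) in $G'$ — which is possible for a common word exactly when there is a length-$(n-1)$ path. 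If no short path exists, every $(s,t)$-route has the ``wrong'' length and the layering forces two particular tokens to remain perpetually out of phase, so no word can synchronise; if a short path exists, one shows a synchronising word can be assembled by first collapsing onto two tokens via that shortcut and then finishing off. I would make this precise by tracking, for each state, the multiset of layer-indices currently occupied, and showing synchronisation is equivalent to being able to bring that multiset to a singleton.

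The main obstacle, and the reason the strongly connected case is genuinely harder than the folklore \NL-hardness proof sketched in \Cref{sec:motivation}, is exactly the feedback edges needed to make the automaton strongly connected: once $t'$ loops back to the start, tokens can wrap around the layers arbitrarily many times, so one can no longer argue ``the SCC of $s$ and the SCC of $t$ are mutually unreachable.'' I would expect to spend most of the proof controlling this: choosing the period of the underlying digraph (it must be $1$ for synchronisation to even be possible) and the precise wiring of the feedback so that wrapping around is ``harmless'' — i.e. it permutes layer-classes in a controlled, affine way — and then proving a normal-form lemma saying that any hypothetical synchronising word can be decomposed into full passes through the layers plus one application of the shortcut. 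A secondary technical point is keeping the alphabet binary while making every state total: wherever a layered vertex has outdegree one (or, after duplication, needs a second transition), one has to choose the second transition so as not to create an unintended shortcut, which requires some care but no new ideas. Once the normal-form lemma is in place, the equivalence with the length-$(n-1)$ path condition should follow by the same bookkeeping used in the proof of \Cref{lem:st-reach-ass}.
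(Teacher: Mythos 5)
Your proposal shares the entry point of the paper (reduce from \Cref{lem:st-reach-ass} and exploit the ``short path exists iff synchronising'' trick) but takes a genuinely different route in the actual construction, and that route has a gap that I do not see how to close.

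The central claim in your outline is that in a layered DFA built directly from $G'$, ``the only way to lose a distinction between two tokens is to route both of them through the single `shortcut' edge of type (iv).'' This is not true of that construction. In a DFA, two tokens merge whenever the letter applied maps both of their current states to the same next state, and in $G'$ this can happen freely: nothing in the hypotheses of \Cref{lem:st-reach-ass} limits the indegree of vertices, so multiple vertices $u_1, u_2$ in layer $i$ can both have an $a$-edge to the same $v$ in layer $i+1$. Your planned invariant (``all tokens advance in lockstep and distinctions only break at the shortcut'') therefore fails immediately, and the would-be normal-form lemma (synchronising word decomposes into full passes plus one shortcut) does not stand, since merges happen long before any shortcut is used and tokens can also cycle through the feedback and take shortcuts on several different passes. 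There is also a secondary issue you flag yourself but do not resolve: tracking multisets of layer indices is at best necessary for synchronisation, not sufficient, since you also need coincidence of states within a layer, which the period/affine bookkeeping does not control.

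The paper avoids exactly this problem by abandoning the single layered automaton and instead building \emph{two symmetric copies} of the relevant machinery (each copy contains a copy of $G$, a tree to ensure reachability, and two hub states $r_{(i)}, q_{(i)}$), plus a timer gadget in one copy only. Mergers that happen inside one copy are mirrored in the other copy, so they never collapse the crucial distinction between the two hubs; the only asymmetry is the single point where $ba$ sends $r_{(1)}$ into $G$ and $r_{(2)}$ into the timer, and one then argues that this asymmetry can be converted into a true merge iff the $(s,t)$-path of length $n-1$ exists. This two-copy symmetry is the missing idea in your proposal: it is what makes the ``only one place where distinctness can be destroyed'' argument actually hold, rather than an invariant you would have to fight the construction to maintain. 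I would recommend re-deriving the reduction from that symmetry principle rather than from the layered digraph $G'$ directly.

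Your membership-in-\NL argument is fine and matches the paper.
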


As mentioned in~\Cref{subs:two-examples}, containment in \NL is well known and easy to show.
We thus need to prove \NL-hardness.
We do so by reducing from the problem in the statement of \Cref{lem:st-reach-ass}. Let $G = (V, E)$, $s, t \in V$ and a natural number $n$ be its input. As remarked after the statement of \Cref{lem:st-reach-ass}, we can assume that all vertices in~$G$ except for~$t$ have outdegree two. We construct a binary total DFA $\Aa = (Q, \Sigma, \delta)$ with $\Sigma = \{a, b\}$. 

\subparagraph*{The idea.} A high-level outline of the construction is as follows, see \Cref{fig:nl-sync} for an illustration. Our construction consists of two almost identical halves, which we refer to as the top and the bottom part following \Cref{fig:nl-sync}. For $i \in \{1, 2\}$, part $i$ consists of a copy of $G$ on the state set~$V_{(i)}$, a tree~$T_{(i)}$ and two states $r_{(i)}, q_{(i)}$. Moreover, the bottom part contains a timer gadget creating a path of length~$n$. The only role of the trees $T_{(i)}$ is to make sure that every state in $V_{(i)}$ can be reached from~$r_{(i)}$, so that the resulting DFA is strongly connected.

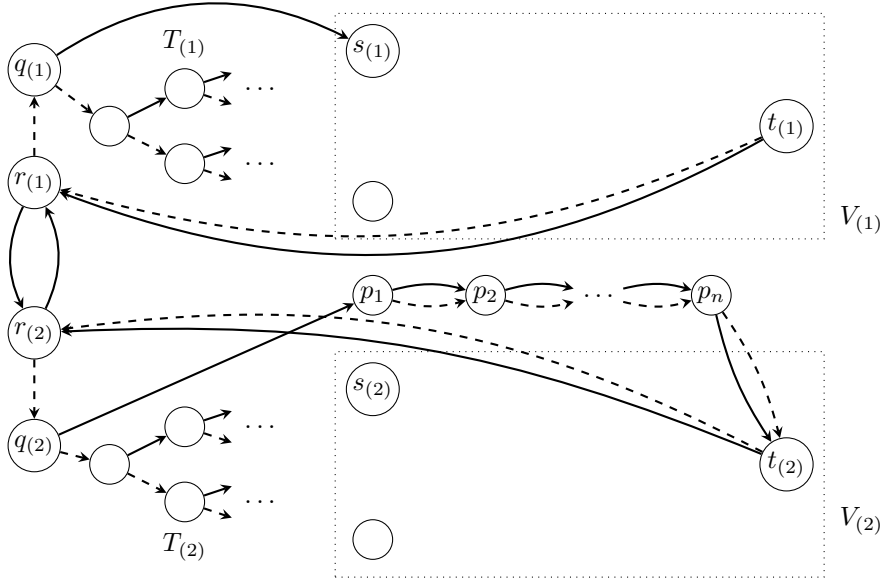
\begin{figure}[ht]\centering
 \centering
\begin{tikzpicture} [node distance = 2cm]
\tikzset{every state/.style={inner sep=1pt,minimum size=1.5em}}

\node [state] at (0, 2.5) (u1) {$q_{(1)}$};
\node [state] at (0, 1) (v1) {$r_{(1)}$};

\node [state] at (1, 1.75) (tree1) {};
\node [state] at (2, 2.25) (tree10) {};
\node [state] at (2, 1.25) (tree11) {};
\node [] at (2.75, 2.5) (tree100) {};
\node [] at (2.75, 2) (tree101) {};
\node [] at (2.75, 1.5) (tree110) {};
\node [] at (2.75, 1) (tree111) {};

\node [] at (2, 2.85) () {$T_{(1)}$};

\node [] at (3, 2.25) () {$\ldots$};
\node [] at (3, 1.25) () {$\ldots$};

\node [state] at (4.5, 2.75) (s1) {$s_{(1)}$};
\node [state] at (4.5, 0.75) (t1) {};
\node [state] at (10, 1.75) (t1end) {$t_{(1)}$};

\draw [draw=black,dotted] (10.5,3.25) rectangle (4,0.25);

\node [] at (11, 0.5) (t1) {$V_{(1)}$};


\node [state] at (0, -1) (v2) {$r_{(2)}$};
\node [state] at (0, -2.5) (u2) {$q_{(2)}$};

\node [state] at (1, -2.75) (ctree1) {};
\node [state] at (2, -3.25) (ctree10) {};
\node [state] at (2, -2.25) (ctree11) {};
\node [] at (2.75, -3.5) (ctree100) {};
\node [] at (2.75, -3) (ctree101) {};
\node [] at (2.75, -2.5) (ctree110) {};
\node [] at (2.75, -2) (ctree111) {};

\node [] at (2, -3.85) () {$T_{(2)}$};

\node [] at (3, -3.25) () {$\ldots$};
\node [] at (3, -2.25) () {$\ldots$};

\node [state] at (4.5, -3.75) (cs1) {};
\node [state] at (4.5, -1.75) (ct1) {$s_{(2)}$};
\node [state] at (10, -2.75) (ct1end) {$t_{(2)}$};

\draw [draw=black,dotted] (10.5,-4.25) rectangle (4,-1.25);
\node [] at (11, -3.5) (t1) {$V_{(2)}$};


\node [state] at (4.5, -0.5) (time1) {$p_1$};
\node [state] at (6, -0.5) (time2) {$p_2$};
\node [] at (7.5, -0.5) (time3) {$\ldots$};
\node [state] at (9, -0.5) (time4) {$p_n$};

\path [-stealth, thick]

(v1) edge [dashed] node[above] {} (u1)
(v2) edge [dashed] node[above] {} (u2)

(v1) edge [bend right=25] node[above] {} (v2)
(v2) edge [bend right=25] node[above] {} (v1)

(u1) edge [bend left=30] node[above] {} (s1)

(t1end) edge [dashed,bend left=20] node[above] {} (v1)
(t1end) edge [bend left=25] node[above] {} (v1)

(u1) edge [dashed] node[above] {} (tree1)
(tree1) edge [] node[above] {} (tree10)
(tree1) edge [dashed] node[above] {} (tree11)

(tree10) edge [] node[above] {} (tree100)
(tree10) edge [dashed] node[above] {} (tree101)

(tree11) edge [] node[above] {} (tree110)
(tree11) edge [dashed] node[above] {} (tree111)


(u2) edge [dashed] node[above] {} (ctree1)
(ctree1) edge [dashed] node[above] {} (ctree10)
(ctree1) edge [] node[above] {} (ctree11)

(ctree10) edge [dashed] node[above] {} (ctree100)
(ctree10) edge [] node[above] {} (ctree101)

(ctree11) edge [dashed] node[above] {} (ctree110)
(ctree11) edge [] node[above] {} (ctree111)

(ct1end) edge [dashed,bend right=17.5] node[above] {} (v2)
(ct1end) edge [bend right=12.5] node[above] {} (v2)


(u2) edge [] node[above] {} (time1)
(time1) edge [dashed,bend right=15] node[above] {} (time2)
(time1) edge [bend left=15] node[above] {} (time2)

(time2) edge [dashed,bend right=15] node[above] {} (time3)
(time2) edge [bend left=15] node[above] {} (time3)

(time3) edge [dashed,bend right=15] node[above] {} (time4)
(time3) edge [bend left=15] node[above] {} (time4)

(time4) edge [bend right=10] node[above] {} (ct1end)
(time4) edge [dashed, bend left=10] node[above] {} (ct1end)

;
\end{tikzpicture}
\caption{The construction in the proof of \Cref{thm:nl-sync}. Transitions by letter $a$ are depicted by solid arrows, and by letter $b$ by dashed arrows.
}\label{fig:nl-sync}
\end{figure}

The construction ensures that there is a word that maps each state to either $r_{(1)}$ or $r_{(2)}$. The action of every word is symmetric on $r_{(1)}$ and $r_{(2)}$ in the corresponding parts, except  for $ba$, which maps $r_{(1)}$ to~$s_{(1)}$ and $r_{(2)}$ to the first state, $p_1$, of the timer gadget.
The only way to break the symmetry between the two parts is to then apply a word that makes $s_{(1)}$ reach~$t_{(1)}$ before or after $p_1$ reaches~$t_{(2)}$.
By our assumptions, that is possible if and only if~$G$ has an $(s,t)$-path of length $n-1$.
If such a path exists, the corresponding word maps $s_{(1)}$ to~$t_{(1)}$ and $p_1$ to~$p_n$.
Subsequent application of $a a$ maps both $t_{(1)}$ and $p_n$ to~$r_{(2)}$.

\subparagraph*{Formal construction.} Formally, the set $Q$ of states consists of two copies $\{r_{(1)}, q_{(1)}\} \cup V_{(1)} \cup T_{(1)}$ and $\{r_{(2)}, q_{(2)}\} \cup V_{(2)} \cup T_{(2)}$ together with a set $\{p_1, \ldots, p_n\}$. For each state $h_{(1)}$ of the first copy we refer to $h_{(2)}$ as its symmetric state, and vice versa.
Here, $V_{(1)}, V_{(2)}$ are two separate copies of $V$ and $T_{(1)}, T_{(2)}$ are two disjoint sets of auxiliary fresh states that we will define later. The transition function~$\delta$ is defined as follows (recall the notation~$q \cdot x = \delta(q, x)$). 

We take $q_{(1)} \cdot a = s_{(1)}$, $q_{(2)} \cdot a = p_1$, $r_{(1)} \cdot a = r_{(2)}$ and $r_{(2)} \cdot a = r_{(1)}$. For each $1 \le i \le n - 1$ and $x \in \{a, b\}$ we define $p_i \cdot x = p_{i + 1}$, and for each $x \in \{a, b\}$ we define $p_n \cdot x = t_{(2)}$.
All other transitions are defined symmetrically for states in their corresponding copies, meaning that if for some $h_{(1)}$ we have $h_{(1)} \cdot x = g_{(1)}$ for $x \in \{a, b\}$, then $h_{(2)} \cdot x = g_{(2)}$. We thus define~$\delta$ only for the first copy.
 We take $r_{(1)} \cdot b = q_{(1)}$. Letter $b$ maps $q_{(1)}$ to the root of a binary tree whose non-leaves are the states in~$T_{(1)}$. This tree is defined in such a way that its leaves are all states in~$V_{(1)}$, each inner vertex has outdegree two, and all paths from the root to the leaves have the same length. The action of $a, b$ on the vertices of this tree is defined arbitrarily in accordance to the edges. By construction, for some $\ell$, every word of length $\ell$ maps the root to a state in~$V_{(1)}$.
 For the states in $V_{(1)}\setminus \{t_{(1)}\}$, the action of $a$ and~$b$ is defined arbitrarily in accordance to the edges in $G$, using the fact that the outdegree of each vertex in $V \setminus \{t\}$ is two. Finally, we define $t_{(1)} \cdot x = r_{(1)}$ for $x \in \{a, b\}$. See \Cref{fig:nl-sync} for an illustration.

\begin{claim} $\Aa$ is strongly connected.
\end{claim}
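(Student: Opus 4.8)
The plan is to show that from every state we can reach $r_{(1)}$ (equivalently $r_{(2)}$, since $r_{(1)} \cdot a = r_{(2)}$ and $r_{(2)} \cdot a = r_{(1)}$), and conversely that from $r_{(1)}$ we can reach every state; together these two facts give strong connectivity, since any state $p$ can reach $r_{(1)}$ which can reach any state $q$. I would treat the two parts (top and bottom) essentially symmetrically, noting that the only asymmetric transitions are $q_{(1)} \cdot a = s_{(1)}$ versus $q_{(2)} \cdot a = p_1$, and the timer states $p_1, \dots, p_n$ which hang off $q_{(2)}$ and feed into $t_{(2)}$.

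First I would argue that every state reaches $r_{(1)}$. For a state $h_{(i)} \in V_{(i)}$: by the promise in \Cref{lem:st-reach-ass}, in $G$ there is a $(v, t)$-path from every vertex $v$, so in the copy $V_{(i)}$ (whose transitions follow the edges of $G$) some word maps $h_{(i)}$ to $t_{(i)}$, and then $t_{(i)} \cdot x = r_{(i)}$; finally $r_{(i)}$ reaches $r_{(1)}$ via at most one $a$. For a state in the tree $T_{(i)}$: reading a word of length $\ell$ maps it (or rather, maps along the tree — here one must be slightly careful, as $T_{(i)}$ is a tree whose leaves are the states of $V_{(i)}$; any state of $T_{(i)}$ can be driven down to a leaf, i.e. into $V_{(i)}$, and then we are in the previous case). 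For $q_{(i)}$: $q_{(1)} \cdot a = s_{(1)} \in V_{(1)}$, and $q_{(2)} \cdot a = p_1$, from which $p_1 \cdot a^{n-1} = p_n$ and $p_n \cdot x = t_{(2)} \in V_{(2)}$; so both $q_{(i)}$ reach some $V_{(j)}$ state. And $r_{(i)}$ reaches $r_{(1)}$ directly. Hence every state reaches $r_{(1)}$.

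Next I would argue $r_{(1)}$ reaches every state. From $r_{(1)}$: applying $b$ gives $q_{(1)}$; from $q_{(1)}$, applying $b$ reaches the root of $T_{(1)}$, and then words of length $\ell$ reach every state of $V_{(1)}$ (the leaves of $T_{(1)}$), so all of $T_{(1)}$'s internal states and all of $V_{(1)}$ are reachable from $r_{(1)}$. Also $q_{(1)} \cdot a = s_{(1)}$, already covered. Applying $a$ to $r_{(1)}$ gives $r_{(2)}$, and then $b$ gives $q_{(2)}$; symmetrically this reaches all of $T_{(2)} \cup V_{(2)}$ and $q_{(2)}$. Finally $q_{(2)} \cdot a = p_1$ and $p_1 \cdot a^{k-1} = p_{k+1}$ for $0 \le k \le n-1$, so all timer states $p_1, \dots, p_n$ are reachable from $r_{(1)}$. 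Thus every state is reachable from $r_{(1)}$.

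Combining the two directions, $\Aa$ is strongly connected. The only point requiring a little care is the treatment of the trees $T_{(i)}$: I would want to state precisely that $T_{(i)}$ together with the $b$-edge from $q_{(i)}$ forms a binary tree rooted at $q_{(i)} \cdot b$ whose leaf set is exactly $V_{(i)}$, so that (downward) every node of $T_{(i)}$ reaches a node of $V_{(i)}$ and (from the root) every node of $V_{(i)}$ is reached; everything else is a direct unwinding of the transition definitions. I do not anticipate a genuine obstacle — this claim is essentially a bookkeeping check — but the tree is the one place where one must make sure both the ``reaches $r_{(1)}$'' and ``reachable from $r_{(1)}$'' directions go through.
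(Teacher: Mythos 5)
Your proof is correct and takes essentially the same approach as the paper's: you show mutual reachability between an arbitrary state and the hub states $r_{(1)}, r_{(2)}$, using the $(v, t)$-path promise from \Cref{lem:st-reach-ass} to get from $V_{(i)}$ to $t_{(i)}$ and hence $r_{(i)}$, the tree $T_{(i)}$ to get from $r_{(i)}$ to $V_{(i)}$, and the $a$-edges between $r_{(1)}$ and $r_{(2)}$ to link the two halves. The paper's version is terser and leaves the routine checks for $T_{(i)}$, $q_{(i)}$ and the timer states implicit, whereas you spell them out explicitly; aside from a small index slip in the timer computation (you write $p_1 \cdot a^{k-1} = p_{k+1}$, which should be $p_1 \cdot a^{k} = p_{k+1}$), everything matches.
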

\begin{proof}
    
By construction, every state in $V_{(1)}$ can be reached from $r_{(1)}$ by following a path in the tree $T_{(1)}$. Similarly, we can reach $t_{(1)}$, and thus $r_{(1)}$, from every state in $V_{(1)}$ by the assumptions of \Cref{lem:st-reach-ass}. The same holds symmetrically for $r_{(2)}$. Since $r_{(1)}$ and $r_{(2)}$ can be reached from each other, we get that~$\Aa$ is strongly connected.
\end{proof}

\begin{claim}
    $\Aa$ is synchronising if and only if there exists an $(s, t)$-path of length $n - 1$ in $G$.
\end{claim}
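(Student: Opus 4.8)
The plan is to prove the two implications separately, exploiting the almost-symmetry of $\Aa$. Let $\sigma$ be the index-swap map sending each state $h_{(1)}$ to $h_{(2)}$ and vice versa; it is defined on every state except the timer states $p_1,\dots,p_n$, and it has no fixed point since the two copies are disjoint. The first thing I would record, by inspecting the definition of $\delta$, is that $\sigma$ is an \emph{almost automorphism}: for every state $h$ in its domain and every letter $x$ one has $\sigma(h)\cdot x=\sigma(h\cdot x)$, with the sole exception of $h\in\{q_{(1)},q_{(2)}\}$ together with $x=a$, where $q_{(1)}\cdot a=s_{(1)}$ and $q_{(2)}\cdot a=p_1$ instead leave the domain of~$\sigma$. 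I would also note that in the functional graph of the letter~$a$ the only cycle is $r_{(1)}\leftrightarrow r_{(2)}$, so $Q\cdot a^{|Q|}\subseteq\{r_{(1)},r_{(2)}\}$.

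\emph{Direction ``$\Leftarrow$''.} Suppose $G$ has an $(s,t)$-path of length $n-1$, and pick a word $w_P\in\{a,b\}^{\,n-1}$ that spells such a path inside the copy~$V_{(1)}$, so that $s_{(1)}\cdot w_P=t_{(1)}$; since $|w_P|=n-1$ and $p_i\cdot x=p_{i+1}$ for both letters, also $p_1\cdot w_P=p_n$. Then $a^{|Q|}\,b\,a\,w_P\,a\,a$ is a synchronising word: the prefix $a^{|Q|}$ sends all of $Q$ into $\{r_{(1)},r_{(2)}\}$, and $b\,a\,w_P\,a\,a$ maps $r_{(1)}\mapsto q_{(1)}\mapsto s_{(1)}\mapsto t_{(1)}\mapsto r_{(1)}\mapsto r_{(2)}$ and $r_{(2)}\mapsto q_{(2)}\mapsto p_1\mapsto p_n\mapsto t_{(2)}\mapsto r_{(2)}$, collapsing $\{r_{(1)},r_{(2)}\}$ onto~$r_{(2)}$.

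\emph{Direction ``$\Rightarrow$''.} Assume, for contradiction, that $\Aa$ is synchronising while every $(s,t)$-path in $G$ has length~$n$ (recall each has length $n-1$ or $n$). Some word merges $r_{(1)}$ and $r_{(2)}$; take a shortest such word~$w$, say $r_{(1)}\cdot w=r_{(2)}\cdot w=z$. Following the pair of current states of $r_{(1)}$ and $r_{(2)}$ along the prefixes of~$w$, the almost-automorphism property keeps this pair of the form $\{g,\sigma(g)\}$ until the first time it equals $\{q_{(1)},q_{(2)}\}$ and the next letter is~$a$; this must happen, since $z$ is not a fixed point of~$\sigma$. Splitting $w=u\,a\,w'$ there, the pair becomes $\{s_{(1)},p_1\}$, and it remains to study $s_{(1)}\cdot w'=p_1\cdot w'=z$. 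Under our assumption $s_{(1)}$ needs exactly $n$ letters to reach $t_{(1)}$, while $p_1$ reaches $t_{(2)}$ after $n$ letters and $r_{(2)}$ after $n+1$; a short case check on $|w'|$ rules out $|w'|\le n$, and for $|w'|\ge n+1$ it yields $s_{(1)}\cdot w'' = r_{(1)}$ and $p_1\cdot w''=r_{(2)}$, where $w''$ is the length-$(n+1)$ prefix of~$w'$. Hence the remaining suffix of~$w'$ merges $r_{(1)}$ and $r_{(2)}$ as well; it is nonempty (otherwise $r_{(1)}=r_{(2)}$) and strictly shorter than~$w$, contradicting minimality.

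The step I expect to be the main obstacle is the analysis after symmetry is broken in direction ``$\Rightarrow$'': a priori $w'$ can be long and the trajectories of $s_{(1)}$ and~$p_1$ may wander back into the two halves, so the argument genuinely needs the minimality of~$w$ (a shortest-counterexample device) together with the exact timing from \Cref{lem:st-reach-ass} — every $(s,t)$-path has length precisely~$n$ in the bad case, matched against exactly $n$ timer states — to force any collision of the two trajectories to pass through the ``length-$(n-1)$'' scenario. The reverse implication and the structural preliminaries are routine verifications.
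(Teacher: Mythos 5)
Your proof is correct and takes essentially the same approach as the paper: reduce to merging $r_{(1)}$ and $r_{(2)}$ via a long power of $a$, exhibit an explicit merging word for the easy direction, and for the converse argue that the two-element image stays of the form $\{g,\sigma(g)\}$ until the pair $\{q_{(1)},q_{(2)}\}$ reads $a$. Your almost-automorphism $\sigma$ together with the shortest-counterexample device is a rigorous rendering of what the paper compresses into ``without loss of generality, this word has to begin with $ba$'' and ``we can thus assume\ldots since otherwise the symmetry is restored.''
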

\begin{proof}
Observe that a repeated application of $a$ eventually maps every state to $r_{(1)}$ or $r_{(2)}$. It thus remains to prove that $r_{(1)}, r_{(2)}$ can be mapped to the same state if and only if there exists an $(s, t)$-path of length $n - 1$ in $G$.

Assume first that such a path exists, and let $w$ be the label of such a path according to our definition of $\delta$ for states in $V_{(1)}$.
Then the word $bawaa$ maps both $r_{(1)}$ and $r_{(2)}$ to~$r_{(2)}$.

Conversely, suppose that there is a word $w$ mapping $r_{(1)}$ and $r_{(2)}$ to the same state. Without loss of generality, this word has to begin with $ba$, since it is the only way to break the symmetry. Let $w = baw_1w_2$, where $|w_1| = n$ (if $w$ is shorter, it cannot map $r_{(1)}$ and $r_{(2)}$ to the same state). Note that $r_{(2)} \cdot baw_1 = t_{(2)}$. We can thus assume that $r_{(1)} \cdot baw_1 \ne t_{(1)}$, since otherwise the symmetry is restored. Hence,  $s_{(1)} \cdot w_1 \ne t_{(1)}$, which is only possible if there exists an $(s, t)$-path of length different to $n$ in $G$. By assumptions on $G$, such a path must have length $n - 1$.
\end{proof}

\begin{remark} \normalfont
The total DFA in the construction has rank at most two, as can be seen in the proof of the previous claim. Here, the \emph{rank} of a total DFA is the minimum size of the set $\{q \cdot w \mid q \in Q\}$ for all words $w$. Hence, we get that \NL-hardness in \Cref{thm:nl-sync} holds true even with the additional promise that the rank of the input total DFA is at most two.
\end{remark}

\section{Completeness and unambiguity of NFAs}\label{sec:comp-and-unamb}

In this section, we prove two results that, in a way, complement each other. Given a binary strongly connected NFA, in~\Cref{sec:nl-mortality} we show that it is \NL-hard to decide if it is complete, even if it is additionally promised to be unambiguous. Recall that for unambiguous NFAs this problem is known to be in \DET, a class slightly above \NL. Similarly, in~\Cref{sec:nl-unamb} we show that given a binary strongly connected NFA, it is \NL-hard (and thus \NL-complete) to decide if it is unambiguous, even if it is additionally promised to be complete. In the next section we will see that both results hold true even for a very restricted class of NFAs, for which both problems are in fact~\NL-complete.

\subsection{Completeness of unambiguous NFAs}\label{sec:nl-mortality}
We now prove the second main result of the paper.

\begin{theorem}\label{thm:nl-complete}
    Deciding if a binary strongly connected unambiguous NFA is complete is \NL-hard.
\end{theorem}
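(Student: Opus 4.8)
The plan is to reduce from the constrained $(s,t)$-reachability problem of \Cref{lem:st-reach-ass}, reusing the two-halves idea of the proof of \Cref{thm:nl-sync} but adapting it so that the resulting NFA is unambiguous and so that incompleteness, rather than synchronisation, encodes the existence of a short $(s,t)$-path. Let $G=(V,E)$, $s,t\in V$ and $n$ be the input, with every vertex except $t$ of outdegree exactly two. As in the synchronisation construction, I would build a top part containing a copy of $G$ on a state set $V_{(1)}$ together with a tree $T_{(1)}$ guaranteeing that every vertex of $V_{(1)}$ is reachable from a hub state $r_{(1)}$, a bottom part with a copy on $V_{(2)}$, tree $T_{(2)}$ and hub $r_{(2)}$, plus a timer gadget $p_1,\dots,p_n$ in the bottom part. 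The letters $a,b$ would be set up so that a word of the form $ba\,w$ sends $r_{(1)}$ into $s_{(1)}$ and $r_{(2)}$ into $p_1$, and so that after the copy of $G$ is traversed, reading $aa$ returns everything to the hub; from $t_{(1)}$ (respectively $t_{(2)}$) the letters should have \emph{no} outgoing transition, so that a token sitting on $t_{(1)}$ while its symmetric token has already left $t_{(2)}$ (or vice versa) can be killed. The key difference from the synchronisation proof is that, to make the NFA incomplete, I want the whole state set to be reachable-from-everything (strong connectivity) while a specific word removes \emph{all} tokens simultaneously; this will be possible exactly when the two halves can be desynchronised, i.e.\ when $s_{(1)}$ reaches $t_{(1)}$ in exactly $n-1$ steps while the timer $p_1$ is still short of $p_n$, or symmetrically.

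More concretely, I would arrange transitions so that the only letter that ``does useful work'' toward incompleteness is one that, when applied at the right moment, kills the token on $t_{(1)}$ or on $t_{(2)}$; every other state at every moment has at least one outgoing transition, so that no word can ever kill a token outside the $t_{(i)}$ states. Then the NFA is incomplete iff there is a word $w$ such that applying it brings \emph{every} token into $\{t_{(1)}\}$ (or every token into $\{t_{(2)}\}$) at the same time and then the next letter kills them all. The two-halves symmetry, broken only by $ba$, ensures that the tokens starting in the two halves stay symmetric until one reaches its $t_{(i)}$; getting a shortest $(s,t)$-path of length $n-1$ lets $t_{(1)}$ be reached exactly one step before the timer would have pushed $p_1$ all the way to $p_n$, which is the window in which the desynchronisation can be exploited to collect all tokens at a single dead state. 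Conversely, if every $(s,t)$-path has length $n$, the two halves are always perfectly synchronised when a $t_{(i)}$-token appears, so for every word at least one token survives (the symmetric one), and the NFA is complete. I would also need to check that the $a$-repetition ``funnels'' all tokens into the hubs, exactly as in the proof of the second claim of \Cref{thm:nl-sync}, so that the analysis reduces to the behaviour of the two hub tokens.

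The two substantive verifications are: (1) \textbf{unambiguity}, i.e.\ the NFA has no diamond; and (2) the correctness equivalence ``incomplete $\iff$ short $(s,t)$-path exists.'' For (1), the only nondeterminism in the construction lies in the copies of $G$, where each non-$t$ vertex has two outgoing edges; but $G$ is acyclic, so two paths labelled by the same word that start at the same vertex and reach the same vertex would, when tracked through the layered timer/copy structure, have to branch and re-merge, and I would argue this is impossible because the layering forces distinct states after a branch to remain in ``parallel'' positions that never re-merge — in particular the trees $T_{(i)}$ are genuine trees (no merging) and the hub/timer part is essentially deterministic. Writing $\Delta$ explicitly and checking that whenever $q\cdot x$ has two elements those two elements have disjoint forward reachable sets (using acyclicity of $G$) gives unambiguity. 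For (2), the forward direction takes a length-$(n-1)$ $(s,t)$-path, reads its label, and exhibits the mortal word; the backward direction is the delicate part — I expect \textbf{this to be the main obstacle} — because I must show that \emph{no} word is mortal when all $(s,t)$-paths have length $n$, which requires the symmetry argument: by induction on the word, the multiset of tokens is symmetric between the two halves up to the moment a token first lands on some $t_{(i)}$, and at that moment (forced by the timer to be step $n$) the symmetric token is on $t_{(3-i)}$, and from there the surviving token can always be kept alive, so completeness holds. Containment of completeness of unambiguous NFAs in \DET is already known, and \NL-hardness is what we prove here; combined with the later \Cref{sec:2-im-b} observations this yields \NL-completeness in the restricted class.
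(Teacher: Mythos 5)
Your plan has a fundamental obstruction that you would hit as soon as you try to write down the transition relation: you want $t_{(1)}$ and $t_{(2)}$ to have no outgoing transitions (so that a token sitting there can be killed by any letter), but you also insist on strong connectivity. A dead state cannot reach any other state, so the automaton cannot be strongly connected. The paper avoids this by giving $t_{(1)}$ self-loops on $a$ and $b$ together with an outgoing $c$-transition back to a hub $f$; what gets ``killed'' is not a token on $t_{(1)}$ but the part of the image of $f$ that lands in $V_{(2)}$ or in the timer gadget, because those states have $q \cdot c = \emptyset$ (and $t_{(2)}$ is removed entirely).

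A second, closely related gap: the construction of \Cref{thm:nl-sync} is a total deterministic DFA, and your two halves, hubs $r_{(1)},r_{(2)}$, trees and $G$-copies are essentially the same deterministic automaton with some transitions erased. But completeness of a strongly connected DFA, even a partial one, is in \ACZ (see \Cref{sec:motivation}): if some state--letter pair has no outgoing transition, strong connectivity lets you route the tokens to it one at a time and kill them, so the DFA is incomplete exactly when such a pair exists, a local property checked by a constant-depth circuit. Hence no reduction into strongly connected DFA completeness can establish \NL-hardness, and genuine nondeterminism is indispensable. Your unambiguity paragraph asserts the nondeterminism ``lies in the copies of $G$, where each non-$t$ vertex has two outgoing edges'', but in the synchronisation construction those two edges carry the \emph{different} labels $a$ and $b$, so that part is deterministic; introducing same-letter nondeterminism there would wreck the single-token bookkeeping that your backward direction (``the multiset of tokens stays symmetric'') depends on. The paper's proof instead uses a third letter $c$ (folded into a binary alphabet afterwards) and a single hub $f$: the tree leaves map nondeterministically into corresponding states of $V_{(1)}$ and $V_{(2)}$, and $c$ both routes every relevant state back to $f$ (giving strong connectivity) and nondeterministically splits $f$ into $\{s_{(2)},p_1\}$. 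This is arranged so that any potential diamond must pass through $f$, and the timer together with the deleted $t_{(2)}$ and the self-looping $t_{(1)}$ shows that none exists while making mortality of $f$ equivalent to the existence of a length-$(n-1)$ $(s,t)$-path. You would need a mechanism of exactly this shape, not a modified DFA, before your plan could go through.
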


To prove the theorem, we reduce from the constrained $(s, t)$-reachability problem described in \Cref{lem:st-reach-ass}. Let $G = (V, E)$, $s, t \in V$ and a natural number $n$ be its input. We construct an NFA $\Aa = (Q, \Sigma, \Delta)$ with $\Sigma = \{a, b, c\}$. To further reduce it to the case of the binary alphabet, one can use a standard technique: duplicate one of the letters, and replace for each state all outgoing transitions with a complete binary directed tree of depth two. In other words, replace letters $a, b, c, c'$ (where $c'$ acts in the same way as $c$) with words $xx, xy, yx, yy$ respectively. Clearly, this operation preserves completeness, unambiguity and strong connectivity.

\subparagraph*{The idea.} A high-level outline of the construction is as follows, see \Cref{fig:nl-mortality} for an illustration. The construction includes two copies of $G$ identically labelled by $a$ and $b$ so that all the transitions are deterministic. It also includes a complete binary tree rooted in state $f$, whose edges are also labelled by $a$ and $b$ so that all the transitions are deterministic. This tree is used to guarantee that every state of both copies of $G$ can be reached from $f$, so that the NFA is strongly connected.

So far, the described construction is symmetric with respect to the two copies of $G$. In the remainder of the construction this symmetry is broken so that state $f$ can be killed if and only if there is a ``shortcut'' (that is, an $(s,t)$-path of length $n - 1$) in~$G$. 
To achieve that, we add a timer gadget of length $n$ similar to the construction in the proof of~\Cref{thm:nl-sync} (see~\Cref{fig:nl-mortality}). Letter $c$ maps $f$ to the set consisting of the state corresponding to $s$ in the second copy of $G$ and the first state of the timer gadget. If there is a ``shortcut'' in $G$, $f$ can be killed by applying $c$, then the label of the shortcut, and then $c$ again ($c$ kills all the states of the timer gadget except the last one).
To guarantee that $f$ cannot be killed without a ``shortcut'', the last state of the timer gadget has self-loops for $a$ and $b$, and letter $c$ maps all states from the second copy of $G$ and the last state of the timer gadget back to $f$. The first copy of $G$ is used to make sure that $f$ cannot be killed in any other way. 
Moreover, our construction ensures that if $f$ can be killed, there is also a word that kills all states.

\begin{figure}[ht]\centering
 \centering
\begin{tikzpicture} [node distance = 2cm]
\tikzset{every state/.style={inner sep=1pt,minimum size=1.5em}}


\node [state] at (4.5, 2.75) (s1) {$s_{(1)}$};
\node [state] at (4.5, 0.75) (t1) {};
\node [state] at (10.5, 1.75) (t1end) {$t_{(1)}$};

\node [state] at (7.25, 2.25) (c1) {};

\draw [draw=black] (11,2.25) rectangle (10,1.25);

\draw [draw=black,dotted] (11,3.25) rectangle (4,0.25);

\node [] at (11.5, 0.5) (t1) {$V_{(1)}$};


\node [state] at (-0.5, 0) (ctree1) {$f$};

\node [state] at (1, -0.75) (ctree10) {};
\node [state] at (1, +0.75) (ctree11) {};
\node [] at (1.75, -1) (ctree100) {};
\node [] at (1.75, -0.5) (ctree101) {};
\node [] at (1.75, 0.5) (ctree110) {};
\node [] at (1.75, 1) (ctree111) {};

\node [] at (2.25, 1) (preleaf) {};
\node [state] at (3, 0.75) (leaf) {};

\node [] at (1.5, 1.5) () {$T$};

\draw [draw=black] (3.5,1.25) rectangle (0.5,-1.25);

\node [] at (2, -0.75) () {$\ldots$};
\node [] at (2, 0.75) () {$\ldots$};

\node [state] at (4.5, -2.75) (cs1) {};
\node [state] at (4.5, -0.75) (ct1) {$s_{(2)}$};
\node [state,dashdotted] at (10.5, -1.75) (ct1end) {$t_{(2)}$};

\node [state] at (7.25, -1.25) (c2) {};

\draw [draw=black,dotted] (11,-3.25) rectangle (4,-0.25);
\node [] at (11.5, -3) (t1) {$V_{(2)}$};

\draw [draw=black] (10,-3.25) rectangle (4,-0.25);

\node [state] at (4.5, -4) (time1) {$p_1$};
\node [state] at (6, -4) (time2) {$p_2$};
\node [] at (7.5, -4) (time3) {$\ldots$};
\node [state] at (9, -4) (time4) {$p_n$};
\node [state] at (10.5, -4) (time5) {$p_{n+1}$};

\draw [draw=black] (11,-4.5) rectangle (10,-3.5);

\path [-stealth, thick]


(t1end) edge [loop right,dashed] node[above] {} (t1end)
(t1end) edge [loop right,min distance=10mm,in=-30,out=30] node[above] {} (t1end)

(time5) edge [loop right,dashed] node[above] {} (time5)
(time5) edge [loop right,min distance=10mm,in=-30,out=30] node[above] {} (time5)


(ctree1) edge [dashed] node[above] {} (ctree10)
(ctree1) edge [] node[above] {} (ctree11)

(ctree10) edge [dashed] node[above] {} (ctree100)
(ctree10) edge [] node[above] {} (ctree101)

(ctree11) edge [dashed] node[above] {} (ctree110)
(ctree11) edge [] node[above] {} (ctree111)


(time1) edge [dashed,bend right=15] node[above] {} (time2)
(time1) edge [bend left=15] node[above] {} (time2)

(time2) edge [dashed,bend right=15] node[above] {} (time3)
(time2) edge [bend left=15] node[above] {} (time3)

(time3) edge [dashed,bend right=15] node[above] {} (time4)
(time3) edge [bend left=15] node[above] {} (time4)

(time4) edge [dashed, bend right=15] node[above] {} (time5)
(time4) edge [bend left=15] node[above] {} (time5)

(ctree1) edge [dotted] node[above] {} (ct1)
(ctree1) edge [dotted] node[above] {} (time1)

(preleaf) edge [dashed] node[above] {} (leaf)

(leaf) edge [bend left=5] node[] {} (c1)
(leaf) edge [dashed, bend right=5] node[above] {} (c1)

(leaf) edge [bend left=5] node[above] {} (c2)
(leaf) edge [dashed, bend right=5] node[above] {} (c2)
;
\end{tikzpicture}
\caption{The construction in the proof of \Cref{thm:nl-complete}. Transitions by $a$ are depicted by solid arrows, by $b$ by dashed arrows, and by $c$ by dotted arrows. All states in solid rectangles are mapped by $c$ to~$\{f\}$. The dashdotted state $t_{(2)}$ is deleted together with all incoming transitions.
}\label{fig:nl-mortality}
\end{figure}
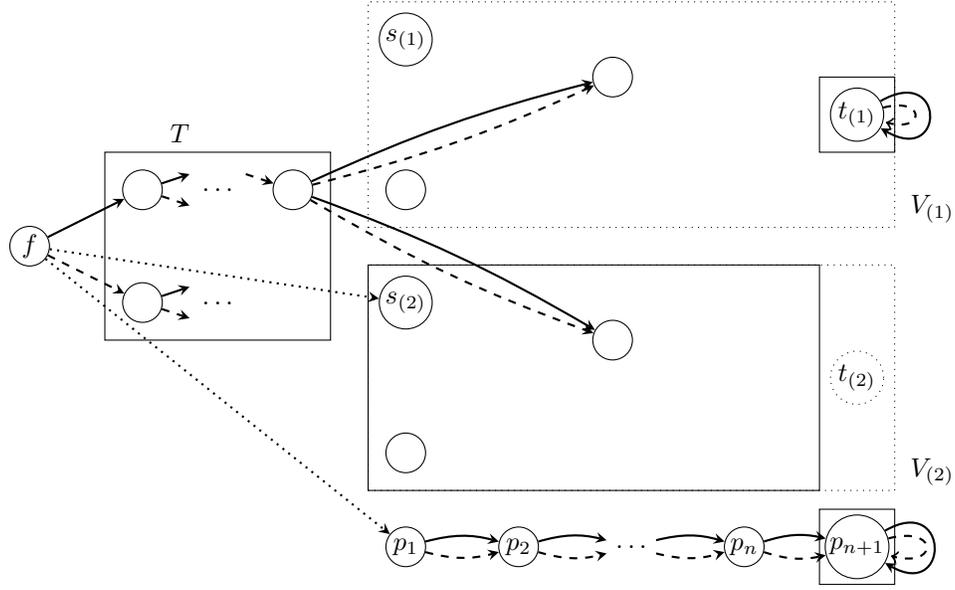

\subparagraph*{Formal construction.} Let us now describe the details of the construction. The set $Q$ consists~of
\begin{itemize}
    \item two copies of $V$, denoted $V_{(1)}$ and $V_{(2)}$; the states corresponding to $s$ and $t$ in the $i$th copy, $i \in \{1, 2\}$, are denoted $s_{(i)}$ and $t_{(i)}$ respectively;

    \item a set $T$ of states of the tree that we will define later, with a root $f \in T$;

    \item a set $P = \{p_1, \ldots, p_n, p_{n + 1}\}$ of states of the timer gadget.
\end{itemize}

The transitions of $\Aa$ are defined as follows. The edges of both copies of $G$ are labelled by $a$ and~$b$ so that all obtained transitions are deterministic and the labellings are the same for both copies (recall that the outdegree of each vertex in $G$ except for $t$ is two). State~$t_{(1)}$ is mapped by both $a$ and $b$ to itself.
State $t_{(2)}$ is deleted together with all incoming transitions.

For states from the set $T$, the transitions by $a$ and $b$ are defined so that the result is a complete binary tree, all the transitions are deterministic, and the number of leaves is $|V|$ (if~$|V|$ is not a power of two, merge some of the leaves). In particular, all paths from the root~$f$ to the leaves have the same length. The number of states in~$T$ is chosen accordingly. We slightly abuse the notation and refer by~$T$ to both the tree and its set of states. 
We assign to each leaf state of $T$ a separate vertex in $V$, and both $a$ and $b$ map each leaf state to a set consisting of the copies of the corresponding vertex in $V_{(1)}$ and $V_{(2)}$.
For $1 \le i \le n$, $p_i$ is mapped by both $a$ and $b$ to $p_{i + 1}$, and $p_{n + 1}$ is mapped by both $a$ and $b$ to itself.

It remains to define the action of letter $c$. Each state in $T \setminus \{f\}$, $V_{(2)}$ and $\{t_{(1)}, p_{n + 1}\}$ is mapped by $c$ to $f$. We also define $f \cdot c = \{s_{(2)}, p_1\}$. Every  other state is mapped by $c$ to the empty set. The remainder of the proof directly follows from the next three claims.

\begin{claim}
    $\Aa$ is strongly connected.
\end{claim}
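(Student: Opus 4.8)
The plan is to verify strong connectivity by exhibiting, for every state, a path to the designated ``hub'' state $f$, and a path from $f$ to that state; since reachability through a common vertex is transitive and symmetric in the sense needed, this suffices. First I would show that $f$ reaches every state. By construction the letters $a$ and $b$ turn $T$ into a complete binary tree rooted at $f$ whose leaves, under $a$ or $b$, cover all of $V_{(1)}$ and $V_{(2)}$; so reading an appropriate word of length equal to the tree depth plus one from $f$ reaches any chosen state of $V_{(1)}$ or $V_{(2)}$, and every internal state of $T$ is of course reachable from $f$ along the tree. The timer states $p_1,\dots,p_{n+1}$ are reached from $f$ by first reading $c$ (which sends $f$ to $\{s_{(2)},p_1\}$, in particular hitting $p_1$) and then reading $a^{i-1}$ to reach $p_i$. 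Thus $f$ reaches all of $Q$.

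Next I would show that every state reaches $f$. For states in $T\setminus\{f\}$, in $V_{(2)}$, and for $t_{(1)}$ and $p_{n+1}$, a single letter $c$ maps them to $f$ by definition. For the remaining states of $V_{(1)}$: by the assumption of \Cref{lem:st-reach-ass}, every vertex $v$ of $G$ has a $(v,t)$-path, so in the deterministic copy $V_{(1)}$ there is a word taking any $s_{(1)}$-side state to $t_{(1)}$, and then $c$ takes $t_{(1)}$ to $f$; hence every state of $V_{(1)}$ reaches $f$. For the timer states $p_i$ with $i\le n$, reading $a^{n+1-i}$ reaches $p_{n+1}$ (using $p_i\cdot a=p_{i+1}$ for $i\le n$ and the self-loop at $p_{n+1}$), and then $c$ reaches $f$. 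This covers all states, so every state reaches $f$ and is reachable from $f$, giving strong connectivity.

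The one point that needs a little care — and which I expect to be the main (minor) obstacle — is the states of $V_{(2)}$: since $t_{(2)}$ and all its incoming transitions are deleted, the internal reachability structure of $V_{(2)}$ toward $t_{(2)}$ is broken, so one cannot mimic the $V_{(1)}$ argument there. This is why the construction routes every state of $V_{(2)}$ to $f$ directly via $c$, and I would make sure to point out explicitly that no $V_{(2)}$-state needs the internal $(v,t)$-paths. A second small check is that the letter-duplication trick used to reduce to a binary alphabet (replacing $a,b,c,c'$ by $xx,xy,yx,yy$) preserves strong connectivity, but this is immediate since it only refines each transition into a length-two path through a fresh intermediate state, and was already noted before the formal construction. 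With these observations, the claim follows.
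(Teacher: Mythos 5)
Your proof is correct and follows the same hub-and-spoke approach as the paper: show that every state reaches $f$ and that $f$ reaches every state. You are somewhat more explicit than the paper's own (quite terse) proof, which does not spell out how $f$ reaches the timer states nor how $p_i$ for $i \le n$ and the states of $T \setminus \{f\}$ get back to $f$; your observation that $V_{(2)}$ must be routed to $f$ directly by $c$ because its internal path to $t_{(2)}$ is broken is also a fair and correct clarification.
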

\begin{proof}
    Every state in $V_{(1)}$ and $V_{(2)}$ can be reached from $f$ via~$T$, $t_{(1)}$ can be reached from every state of $V_{(1)}$ by the assumption of~\Cref{lem:st-reach-ass}, and every state in $\{t_{(1)}\} \cup V_{(2)} \cup \{p_{n + 1}\}$ is mapped by $c$ back to~$f$.
\end{proof}

\begin{claim}
    $\Aa$ is unambiguous.
\end{claim}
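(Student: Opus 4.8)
The plan is to show that no word $w$ labels two distinct paths between any pair of states, by analysing where nondeterminism can possibly arise. The only nondeterministic transitions in $\Aa$ are: the two leaf-to-$V_{(1)}\cup V_{(2)}$ transitions labelled $a$ and $b$ (each leaf of $T$ sends a token into both copies of $G$), and the transition $f \cdot c = \{s_{(2)}, p_1\}$. Everything else — the transitions inside $V_{(1)}$ and $V_{(2)}$, inside $T$, inside $P$, the self-loops at $t_{(1)}$ and $p_{n+1}$, and all the $c$-transitions collapsing states to $f$ — is deterministic (a partial function). So a diamond would require two paths that split at one of these three branching points and later re-merge.

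First I would set up the key structural observation: the state set partitions into the "regions" $T$, $V_{(1)}$, $V_{(2)}$, $P$, plus the recurrent states $t_{(1)}$ and $p_{n+1}$, and I would track, for any word, which region a token can be in. The crucial point is that once a token leaves $T$ via a leaf, it enters exactly one of $V_{(1)}$ or $V_{(2)}$ (as two tokens, one in each copy, but these are in different regions), and from then on, as long as only $a,b$ are read, a token in $V_{(1)}$ stays in $V_{(1)}\cup\{t_{(1)}\}$ and a token in $V_{(2)}$ stays in $V_{(2)}$ — the two copies never interact under $a,b$. Reading $c$ from anywhere in $V_{(2)}\cup\{t_{(1)}\}$ sends the token (deterministically) to $f$, and from $V_{(1)}\setminus\{t_{(1)}\}$ kills it; reading $c$ from $f$ splits into $s_{(2)}$ and $p_1$. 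So after the first $c$, a token that survives is in $\{f\}$, or (if it was at $f$) in $\{s_{(2)}, p_1\}$; crucially $s_{(2)}\in V_{(2)}$ and $p_1\in P$ are again in different regions.

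The heart of the argument is then: suppose $w=w_1 w_2$ and a token at $p$ reaches $t_1$ under $w_1$ and also $t_2\ne t_1$ under $w_1$, with both $t_1,t_2$ reaching a common $q$ under $w_2$. I would argue this is impossible by showing that the two "copies" of a token, once separated by one of the three branch points, remain in disjoint regions forever, so they can never be at the same state $q$ simultaneously. Concretely: the $a,b$-dynamics restricted to $V_{(1)}\cup\{t_{(1)}\}$, to $V_{(2)}$, to $T$, and to $P\cup\{p_{n+1}\}$ are each deterministic, and the only transitions crossing between regions are the leaf transitions ($T\to V_{(1)}$ and $T\to V_{(2)}$, reading the same letter simultaneously), the $c$-transitions into $\{f\}\subseteq T$ (which reunify everything that survives into $T$), and $f\cdot c=\{s_{(2)},p_1\}$ splitting $T\to V_{(2)}$ and $T\to P$. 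I would define a function assigning to each pair (current region of the two tracked tokens) an invariant showing the two tokens are always in distinct regions, maintained under every letter: if both are in $T$ and we read a leaf transition or $f\cdot c$, they split into two distinct regions; if both are in $T$ and read $c$ they both go to $\{f\}$ so they'd have to have been at the same state already (contradiction with being distinct); in all other cases each region's dynamics is a partial function so distinctness within a region is preserved, and the cross-region maps never merge two regions into one. Hence $t_1=t_2$, contradiction.

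The main obstacle I anticipate is the bookkeeping at the moment the two tracked tokens are both inside $T$: I must rule out that the branching at a leaf (or at $f$ under $c$) could still lead them to coincide later. The resolution is that $T$ is a tree read deterministically by $a,b$, and the leaf transitions send a token to $V_{(1)}$ \emph{and} to $V_{(2)}$ — but for a single diamond we follow one token, and two distinct tokens in $T$ are at distinct tree nodes, so after reading the same word they are either still at distinct tree nodes (tree determinism), or have both left $T$ — and since the leaf-exit lands simultaneously in both copies, distinct leaves would still give tokens that are in, say, $V_{(1)}$ but at distinct vertices (the $V_{(1)}$-copy of $G$ under $a,b$ is deterministic). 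The one subtlety is $t_{(1)}$, which absorbs under $a,b$: two distinct $V_{(1)}$-tokens could both reach $t_{(1)}$, which \emph{would} be a diamond within $V_{(1)}$ — but this is exactly excluded by the hypothesis of \Cref{lem:st-reach-ass} that $G$ is acyclic with all $(s,t)$-paths of the same length $n-1$ or $n$, together with the fact that distinct paths in $G$ from the same vertex to $t$ are ruled out by acyclicity only if... — actually I must be careful here: acyclicity alone does not forbid two paths of the same length from $s$ to $t$. I would instead note that within $V_{(1)}$ the labelling makes the transitions a genuine partial function (each vertex has its two out-edges labelled $a$ and $b$ distinctly), so there is exactly one path with a given label from a given vertex; hence no diamond inside $V_{(1)}$, and similarly inside $V_{(2)}$. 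That determinism of each region, plus the region-disjointness invariant, is what I would assemble into the full proof.
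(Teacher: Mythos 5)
Your skeleton matches the paper's — both identify that a diamond must originate at one of the branching states (a leaf of $T$ under $a,b$, or $f$ under $c$) and then track the two resulting tokens — but the argument has a genuine gap exactly where the proof's real work lies. You assert that ``the cross-region maps never merge two regions into one,'' but as a blanket statement this is false: the letter $c$ sends $V_{(2)}$, $t_{(1)}$, $p_{n+1}$, and $T\setminus\{f\}$ all to $\{f\}$, so the cross-region map $c$ absolutely does merge regions. What must be proved instead is that the two \emph{specific} tokens of a potential diamond are never simultaneously alive at states that both map to $\{f\}$ under $c$. This needs two concrete arguments that your proposal never makes: (i) for the leaf split into $V_{(1)}$ and $V_{(2)}$, the two tokens track the \emph{same} vertex of $G$ (identical labellings), so when the $V_{(1)}$ token reaches $t_{(1)}$ the $V_{(2)}$ token has already died at the deleted $t_{(2)}$, and when the $V_{(1)}$ token is at $v_{(1)}\ne t_{(1)}$ it is killed by $c$; (ii) for the split $f\cdot c=\{s_{(2)},p_1\}$, reading a word $w_1\in\{a,b\}^*$, if $|w_1|<n$ then the $P$-token sits in $\{p_1,\ldots,p_n\}$ and is killed by $c$, while if $|w_1|\ge n$ the $V_{(2)}$-token has already died — and this second fact crucially uses the promises of \Cref{lem:st-reach-ass} (acyclicity, $t$ the unique sink, all $(s,t)$-paths of length $\le n$) to conclude that no word of length $\ge n$ keeps a token alive in $V_{(2)}$ starting from $s_{(2)}$.

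Your own self-correction in the last paragraph shows you were circling the right worry, but you resolved it in the wrong direction: observing that $V_{(1)}$ is internally deterministic rules out two tokens diverging \emph{within} $V_{(1)}$, which was never the danger; the danger is two tokens in \emph{different} regions being funnelled to $f$ simultaneously by $c$. The paper's proof is shorter precisely because it dispenses with the region invariant and just computes the possible images from a leaf and from $f$ directly, using (i) and (ii) above. Without those two observations — and in particular without ever invoking the path-length promise in the way (ii) requires — the proposal does not establish unambiguity.
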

\begin{proof}
The only states $q$ with $|q \cdot x| \ge 2$ for some letter $x$ are $f$ and the leaf states of $T$. A diamond can only start in such a state.

Consider first the case of a leaf state $q$ of $T$. The image of $q$ under $w \in \{a, b\}^*$ consists either of two states that are copies of the same vertex of $G$ (then the application of $c$ kills the copy in $V_{(1)}$), or of only one state $t_{(1)}$ (hence there cannot be a diamond by definition).

Consider now paths starting in $f$ and labelled by a word $w$. To induce a diamond, $w$ must start with $c$. Let $w = cw_1w_2$, where $w_1 \in \{a, b\}^*$ and $w_2$ is an empty word or starts with $c$. If $|w_1| < n$, then the application of $c$ after $w_1$ kills the image of $f$ in $P$, and if $|w_1| \ge n$ then the application of $w_1$ kills the image in $V_{(2)}$. In both cases, there can be no diamond. 
\end{proof}

\begin{claim}\label{claim:iff-nl-compl}
    $\Aa$ is complete if and only if there is no $(s, t)$-path of length $n - 1$ in $G$.
\end{claim}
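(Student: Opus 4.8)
The plan is to prove the two implications of the claim separately. For ``if $G$ has an $(s,t)$-path of length $n-1$ then $\Aa$ is incomplete'' I would exhibit an explicit mortal word. For the converse, ``if $G$ has no such path then $\Aa$ is complete'', I would fix an arbitrary word $\mu \in \{a,b,c\}^*$ and show $f \cdot \mu \neq \emptyset$; since $f \in Q$ this gives $Q \cdot \mu \neq \emptyset$, i.e.\ completeness.

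\emph{First implication.} Suppose $w \in \{a,b\}^*$ with $|w| = n-1$ labels an $(s,t)$-path of $G$ (the labellings of the two copies coincide). The word $\mu = a^{2|V|}\,c\,c\,w\,c$ is then mortal, as one sees by tracking the image of $Q$ factor by factor. Since every all-$a$ walk in $G$ runs into the unique sink $t$ in fewer than $|V|$ steps, reading $a^{2|V|}$ sends every state of $V_{(1)}$ and every state below a leaf of $T$ to the absorbing state $t_{(1)}$, annihilates the matching walks in $V_{(2)}$ (because $t_{(2)}$ was deleted), and drives every $p_i$ to $p_{n+1}$; thus $Q \cdot a^{2|V|} = \{t_{(1)}, p_{n+1}\}$. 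The first $c$ maps this to $\{f\}$; the second $c$ maps $f$ to $\{s_{(2)}, p_1\}$; reading $w$ kills $s_{(2)}$ (the walk from $s$ expires on the last letter, since $t_{(2)}$ is gone) and moves $p_1$ to $p_n$; and the final $c$ kills $p_n$. Hence $Q \cdot \mu = \emptyset$.

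\emph{Second implication.} Suppose $G$ has no $(s,t)$-path of length $n-1$; by the promise of \Cref{lem:st-reach-ass} every $(s,t)$-path then has length $n$ (if one exists at all). Write $\mu = \mu_0 c \mu_1 c \cdots \mu_{k-1} c \mu_k$ with each $\mu_i \in \{a,b\}^*$, and call the prefixes $\varepsilon,\ \mu_0 c,\ \mu_0 c \mu_1 c, \ldots$ the \emph{checkpoints}. I would prove the invariant that \emph{at every checkpoint, $f$ maps to $\{f\}$ or to $\{s_{(2)}, p_1\}$}. It holds at $\varepsilon$, and the inductive step relies on three observations about a factor $u \in \{a,b\}^*$ followed by a single $c$:
\begin{enumerate}[(i)]
\item $f \cdot u \neq \emptyset$: while $u$ is short this is one node of $T$, and once the walk leaves $T$ it keeps its $V_{(1)}$-component, which never vanishes because $t_{(1)}$ carries $a,b$-loops and $V_{(1)}$ is otherwise a faithful copy of $G$;
\item if $u \neq \varepsilon$ then $(f \cdot u) \cdot c = \{f\}$: letter $c$ sends every state of $T \setminus \{f\}$, every state of $V_{(2)}$, and $t_{(1)}$ to $f$, so $(f \cdot u) \cdot c$ could be empty only if $f \cdot u \subseteq V_{(1)} \setminus \{t_{(1)}\}$; but the $V_{(1)}$- and $V_{(2)}$-walks leaving a common leaf of $T$ trace the \emph{same} path in $G$, so the $V_{(2)}$-walk dies exactly when the $V_{(1)}$-walk enters the absorbing $t_{(1)}$, which excludes this;
\item $\{s_{(2)}, p_1\} \cdot u\, c = \{f\}$: if $|u| \geq n$ then $p_1$ has reached $p_{n+1}$, which $c$ maps to $f$; if $|u| < n$ then, as no $(s,t)$-path is shorter than $n$, the walk from $s_{(2)}$ is still alive in $V_{(2)}$ and $c$ maps it to $f$.
\end{enumerate}
By (i)--(iii): starting from $\{f\}$ at a checkpoint, reading the factor $\mu_i c$ yields $\{s_{(2)}, p_1\}$ if $\mu_i = \varepsilon$ and $\{f\}$ otherwise; starting from $\{s_{(2)}, p_1\}$, it yields $\{f\}$. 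This proves the invariant. Finally, letting $P$ be the longest prefix of $\mu$ that is a checkpoint, $f \cdot \mu = (f \cdot P) \cdot \mu_k$; this is nonempty by (i) when $f \cdot P = \{f\}$, and contains $p_1 \cdot \mu_k$ when $f \cdot P = \{s_{(2)}, p_1\}$. Hence $f \cdot \mu \neq \emptyset$.

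The main obstacle is observation (ii): preventing a word from driving the whole image of $f$ into $V_{(1)} \setminus \{t_{(1)}\}$, from where the next $c$ would wipe it out. This is exactly what the mirrored design of the two copies of $G$ --- with $t_{(1)}$ made absorbing and $t_{(2)}$ removed --- is built to avoid, by forcing the two components to absorb and vanish in lockstep. The promise of \Cref{lem:st-reach-ass} is used in observation (iii), and dually in the first implication, where a length-$(n-1)$ path is precisely what makes the $V_{(2)}$-walk from $s_{(2)}$ expire in step with the timer arriving at $p_n$, so that the trailing $c$ empties the automaton.
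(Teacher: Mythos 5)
Your proof is correct and follows essentially the same approach as the paper: exhibit an explicit mortal word when a length-$(n-1)$ shortcut exists, and otherwise show $f \cdot \mu \ne \emptyset$ by induction on the occurrences of $c$. The only cosmetic differences are the choice of mortal word ($a^{2|V|}ccwc$ versus the paper's $cwccwc$, which exploits that a single $cwc$ already sends every $q \ne f$ into $\{f\}$ or $\emptyset$) and your cleaner packaging of the induction as the checkpoint invariant $f \cdot P \in \{\{f\}, \{s_{(2)},p_1\}\}$.
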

\begin{proof}
Assume first that there is an $(s, t)$-path of length $n - 1$ in $G$, and let $w \in \{a, b\}^{n - 1}$ be a word labelling the corresponding path in $V_{(1)}$. We claim that $cwccwc$ is a mortal word. Observe first that $cwc$ kills~$f$. Indeed, $f \cdot c = \{s_{(2)}, p_1\}$, and there is at most one path starting in each of $s_{(2)}, p_1$ and labelled by $w$. By construction, $w$ kills $s_{(2)}$, since it maps it to $t_{(2)}$, which is deleted. Moreover, $wc$ kills~$p_1$, since $|w| = n - 1$.
It remains to note that the image of every state $q \ne f$ under $cwc$ is either~$\{f\}$ or the empty set. Indeed, every state $q \ne f$ is mapped by $c$ to $\{f\}$ or the empty set, then $f$ is mapped by $w$ to a subset of $T \cup V_{(1)} \cup V_{(2)}$, and then this subset is mapped by $c$ to $\{f\}$ by construction. Hence, the second application of $cwc$ kills all remaining states.

Now we assume that all $(s, t)$-paths in $G$ are of length $n$. We are going to show that in this case for every word $w \in \Sigma^*$ we have $f \cdot w \ne \emptyset$. This obviously implies that $\Aa$ is complete.

If $w \in \{a, b\}^*$, then $f \cdot w$ is obviously non-empty by construction. 
Assume now that $w = w_1cw_2$, where $w_1 \in \{a, b\}^*$. If $f \cdot w_1 = \{q\}$ for $q \in T$, then $f \cdot w_1 c = \{f\}$, and we can proceed by induction by the number of occurrences of $c$ in $w$. Similarly, if $f$ is mapped by~$w_1$ to a subset of $V_{(1)} \cup V_{(2)}$, then depending on the length of $w_1$ the image $f \cdot w_1$ is either a set of the copies of a vertex from $V$, or $t_{(1)}$. In both cases we get that $f \cdot w_1 c = \{f\}$. 

It remains to consider the case where $w_1$ is the empty word. If $w_2$ does not contain any occurrences of $c$, then $f \cdot cw_2$ contains either a state from $V_{(2)}$ (if $|w_2| \le n - 1$), or $p_{(n + 1)}$ (if $|w_2| \ge n$). Here, we use the fact that no word of length at most $n - 1$ maps $s_{(2)}$ to the deleted state $t_{(2)}$ since there is no $(s, t)$-path of length $n - 1$ in $G$. Finally, if $c$ occurs in $w_2$, take $w_2 = w_3 c w_4$ with $w_3 \in \{a, b\}^*$. Then $f \cdot c w_3 c = \{f\}$ by the same argument as above, which concludes the proof.
\end{proof}




\subsection{Unambiguity of complete NFAs}\label{sec:nl-unamb}
Extending the technique of the previous subsection, we now show the following result.

\begin{theorem}\label{thm:nl-unamb}
    Deciding if a binary strongly connected complete NFA is unambiguous is \NL-complete.
\end{theorem}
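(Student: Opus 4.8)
The containment in \NL is known: deciding whether a (strongly connected) NFA is unambiguous is already \NL-complete, and adding the promise of completeness can only make the problem easier, so it remains in \NL. Thus the task is to prove \NL-hardness under the completeness promise, which I would do by reduction from the constrained $(s,t)$-reachability problem of \Cref{lem:st-reach-ass}, in the same spirit as the proof of \Cref{thm:nl-complete}. Given $G = (V,E)$, $s,t \in V$ and $n$, I would first build a ternary NFA over $\{a,b,c\}$ and then collapse to a binary alphabet by the standard letter-splitting trick already used in \Cref{thm:nl-complete} (replacing $a,b,c,c'$ by $xx,xy,yx,yy$), which preserves unambiguity, completeness and strong connectivity. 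Since \NL $=$ \coNL, it is enough to arrange that the constructed NFA contains a diamond if and only if $G$ has an $(s,t)$-path of length $n-1$, while being complete and strongly connected unconditionally.

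The construction reuses the skeleton of the automaton from \Cref{thm:nl-complete}: two copies of $G$ that are identically and deterministically labelled by $a$ and $b$, a binary tree rooted at a state $f$ whose leaves fan out into the two copies (so that $f$ reaches every state, ensuring strong connectivity), a timer gadget $p_1,\dots,p_{n+1}$, and the control letter $c$ with $f \cdot c = \{s_{(2)}, p_1\}$, which simultaneously launches the second copy of $G$ from $s$ and the timer from $p_1$. Whereas in \Cref{thm:nl-complete} a ``shortcut'' word $w$ of length $n-1$ causes $f$ to be killed (which is what made the automaton incomplete), here I would instead redirect the remaining transitions so that $w$ makes the two $c$-launched beams \emph{re-converge}: keeping the second copy's image $t_{(2)}$ of $t$ and introducing a fresh state $z$, one sets the last transitions so that $f \xrightarrow{c} \{s_{(2)},p_1\} \xrightarrow{w} \{t_{(2)},p_n\} \xrightarrow{c} z$ is realised by two distinct $cwc$-labelled paths from $f$ to $z$ — a diamond. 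The remaining $c$-transitions are chosen, much as in \Cref{thm:nl-complete}, so that the state $f$ survives every word (this alone forces completeness) and so that every state still reaches $f$ (forcing strong connectivity).

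The forward implication — a length-$(n-1)$ path gives the diamond exhibited above — is immediate. The \textbf{main obstacle} is the converse: if $G$ has no $(s,t)$-path of length $n-1$, then the constructed NFA is unambiguous. As in \Cref{thm:nl-complete}, the only nondeterministic transitions are $f \xrightarrow{c} \{s_{(2)},p_1\}$ and the tree-leaf transitions, so any diamond must diverge at one of these; the plan is to track the two resulting beams and show they never meet on a common word except at $z$, which would force a length-$(n-1)$ $(s,t)$-path. The subtlety is that the automaton contains cycles (self-loops at the $t$-copies and at $p_{n+1}$, and the $c$-loop back to $f$), so two paths of \emph{different} lengths could in principle re-synchronise at a common state; the transitions must therefore be set so that, in the no-shortcut case, every divergence is resolved \emph{by killing one of the two beams before any re-synchronisation can occur} — this is precisely the role played by the spare copy of $G$ together with the carefully chosen $c$-images of the timer states, just as deleting $t_{(2)}$ did in \Cref{thm:nl-complete}. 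Carrying out this case analysis, together with the routine verification of completeness, strong connectivity and the binary reduction, finishes the proof; combined with \NL-membership this yields \NL-completeness.
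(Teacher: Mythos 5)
Your high-level plan---reduce from the constrained $(s,t)$-reachability problem of \Cref{lem:st-reach-ass}, use the letter $c$ to launch from $f$ both a copy of $G$ and a timer, and arrange that a $cwc$-labelled diamond exists exactly when $w$ is a length-$(n-1)$ shortcut---matches the paper's. However, you carry over from \Cref{thm:nl-complete} the \emph{two} copies of $G$ together with the nondeterministic tree-leaf transitions, and this is where the proposal has a genuine gap. With two copies, each tree leaf $\ell$ satisfies $\ell\cdot a=\{v_{(1)},v_{(2)}\}$ and launches a pair of beams that track identically until both reach $t_{(1)}$ and $t_{(2)}$. In \Cref{thm:nl-complete} this was harmless because $t_{(2)}$ was deleted, but you keep $t_{(2)}$ alive (self-loops, $t_{(2)}\cdot c = z$). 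You must therefore guarantee that the images of $t_{(1)}$ and $t_{(2)}$ never coincide under any continuation, while simultaneously routing both back towards $f$ for strong connectivity---a delicate constraint you acknowledge as ``the main obstacle'' and leave unresolved. As written, the converse direction (no shortcut $\Rightarrow$ unambiguous) is not proved, and there is a real risk that a diamond exists regardless of $G$.

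The paper sidesteps this entirely by using a \emph{single} copy of $G$ (denoted $V'$), a deterministic tree whose leaves map to single states of $V'$, a timer $p_1,\dots,p_n$ with $p_n\cdot x=\emptyset$, a self-looping $t'$, and $c$ sending $f$ to $\{s',p_1\}$ and every state of $T\cup\{t',p_1,\dots,p_n\}$ back to $\{f\}$ (all other $c$-images empty). Then $f\cdot c=\{s',p_1\}$ is the \emph{only} nondeterministic transition in the whole NFA, so any diamond must be labelled $cwc$ with $w\in\{a,b\}^*$; one then observes that, absent a shortcut, $f\cdot cw$ is either a pair consisting of a non-$t'$ state of $V'$ (killed by $c$) plus a timer state (mapped by $c$ to $\{f\}$) when $|w|\le n-1$, or the singleton $\{t'\}$ when $|w|\ge n$, so $f\cdot cwc$ is always a singleton. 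Your fresh state $z$ is likewise unnecessary: the paper's diamond is a $cwc$-loop from $f$ back to $f$. Dropping the second copy of $G$ and the state $z$ would collapse your case analysis to the paper's and close the gap.
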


The fact that deciding unambiguity of an NFA is in \NL is well-known and is easy to see by reducing it to reachability on pairs of states. 
To prove \NL-hardness, we use a similar construction to the one in the proof in \Cref{thm:nl-complete}. We again reduce from the $(s, t)$-reachability problem with constraints described in \Cref{lem:st-reach-ass}. Let $G = (V, E)$, $s, t \in V$ and a natural number $n$ be its input. We construct an NFA $\Aa = (Q, \Sigma, \Delta)$ with $\Sigma = \{a, b, c\}$ and then make the alphabet binary as described in the previous subsection. We only provide the idea of the construction and the proof, since both of them are similar to those from \Cref{thm:nl-complete}. 

\subparagraph*{The idea.} A high-level outline of the construction is as follows, see \Cref{fig:nl-unamb} for an illustration. The construction includes a copy of $G$ labelled by $a$ and $b$ so that all the transitions are deterministic. Recall that we assume that the outdegree of each vertex in $G$ except for $t$ is precisely two. The construction also includes a complete binary tree $T$ rooted in state $f$. All paths from $f$ to the leaves of $T$ have the same length. The edges of the tree are also labelled by $a$ and $b$ so that all the transitions are deterministic. We denote the states of the copy of~$G$ by~$V'$, and the copies of $s$ and $t$ by $s'$ and $t'$ respectively. The tree $T$ is used to guarantee that every state in $V'$ can be reached from $f$. For $x \in \{a, b\}$, we define $t' \cdot x = \{t'\}$. We also introduce again the timer gadget, now with states $p_1, \ldots, p_n$. For $x \in \{a, b\}$, we define $p_i \cdot x = \{p_{i + 1}\}$ for $1 \le i \le n - 1$, and $p_n \cdot x = \emptyset$.

Letter $c$ maps $f$ to $\{s', p_1\}$, and every state from $T \cup \{t', p_1, \ldots, p_n\}$ to $\{f\}$. Every other state is mapped by $c$ to the empty set.

By construction, a diamond in $\Aa$ can only start in $f$ and be labelled by a word $cwc$ for some $w \in \{a, b\}^*$.
If there is an $(s, t)$-path of length $n - 1$ in $G$, and this path is labelled by $w$ in $\Aa$, then $cwc$ labels two different paths from $f$ to itself. Otherwise, observe that for every $w \in \{a, b\}^*$ the set $f \cdot cw$ consists either of a single state from $V' \setminus \{t'\}$ and a state from the timer gadget $\{p_1, \ldots, p_n\}$ (if $|w| \le n - 1$), or only of $t'$ (if $|w| \ge n$). Hence, there cannot be a diamond in $\Aa$. Thus, $\Aa$ is unambiguous if and only if there is no $(s, t)$-path of length $n - 1$ in $G$.

\begin{figure}[ht]\centering
 \centering
\begin{tikzpicture} [node distance = 2cm]
\tikzset{every state/.style={inner sep=1pt,minimum size=1.5em}}


\node [state] at (-0.5, -1.5) (ctree1) {$f$};

\node [state] at (1, -2.25) (ctree10) {};
\node [state] at (1, -0.75) (ctree11) {};
\node [] at (1.75, -2.5) (ctree100) {};
\node [] at (1.75, -2) (ctree101) {};
\node [] at (1.75, -1) (ctree110) {};
\node [] at (1.75, -0.5) (ctree111) {};

\node [] at (2.25, -1.75) (preleaf) {};
\node [state] at (3, -2) (leaf) {};

\node [] at (2, -3) () {$T$};

\draw [draw=black] (3.5,-0.25) rectangle (0.5,-2.75);

\node [] at (2, -2.25) () {$\ldots$};
\node [] at (2, -0.75) () {$\ldots$};

\node [state] at (4.5, -2.75) (cs1) {};
\node [state] at (4.5, -0.75) (ct1) {$s'$};
\node [state] at (10.5, -1.75) (t1end) {$t'$};

\draw [draw=black] (11,-2.25) rectangle (10,-1.25);

\node [state] at (7.25, -1.25) (c2) {};

\draw [draw=black,dotted] (11,-3.25) rectangle (4,-0.25);
\node [] at (11.5, -3) (t1) {$V'$};

\node [state] at (4.5, -4) (time1) {$p_1$};
\node [state] at (6, -4) (time2) {$p_2$};
\node [] at (7.5, -4) (time3) {$\ldots$};
\node [state] at (9, -4) (time4) {$p_n$};

\draw [draw=black] (9.5,-4.5) rectangle (4,-3.5);

\path [-stealth, thick]

(t1end) edge [loop right,dashed] node[above] {} (t1end)
(t1end) edge [loop right,min distance=10mm,in=-30,out=30] node[above] {} (t1end)


(ctree1) edge [dashed] node[above] {} (ctree10)
(ctree1) edge [] node[above] {} (ctree11)

(ctree10) edge [dashed] node[above] {} (ctree100)
(ctree10) edge [] node[above] {} (ctree101)

(ctree11) edge [dashed] node[above] {} (ctree110)
(ctree11) edge [] node[above] {} (ctree111)


(time1) edge [dashed,bend right=15] node[above] {} (time2)
(time1) edge [bend left=15] node[above] {} (time2)

(time2) edge [dashed,bend right=15] node[above] {} (time3)
(time2) edge [bend left=15] node[above] {} (time3)

(time3) edge [dashed,bend right=15] node[above] {} (time4)
(time3) edge [bend left=15] node[above] {} (time4)

(ctree1) edge [dotted] node[above] {} (ct1)
(ctree1) edge [dotted,bend right=20] node[above] {} (time1)

(preleaf) edge [dashed] node[above] {} (leaf)

(leaf) edge [bend left=5] node[above] {} (c2)
(leaf) edge [dashed,bend right=5] node[above] {} (c2)
;
\end{tikzpicture}
\caption{The construction in the proof of~\Cref{thm:nl-unamb}. Transitions by letter $a$ are depicted by solid arrows, by letter $b$ by dashed arrows, and by letter $c$ by dotted arrows. All states inside of solid rectangles are mapped by $c$ to $\{f\}$.
}\label{fig:nl-unamb}
\end{figure}
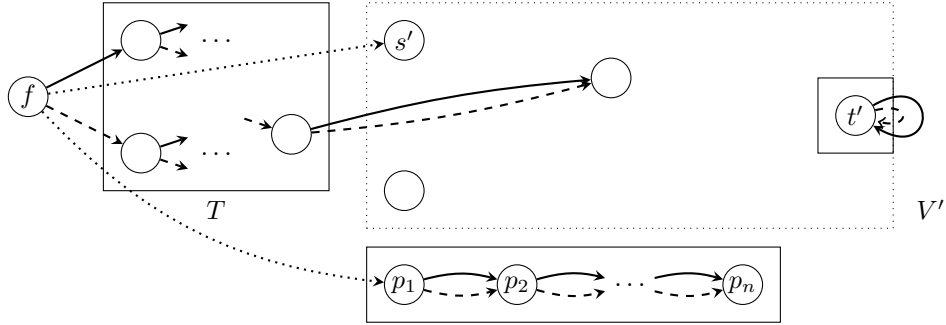

\section{A restricted class of NFAs}\label{sec:2-im-b}
We call an NFA $\Aa = (Q, \Sigma, \Delta)$ \emph{$k$-image-bounded} if for every state $q \in Q$ and every word $w \in \Sigma^*$ we have $|q \cdot w| \le k$.
Clearly, an NFA is $1$-image-bounded if and only if it is a DFA. Hence, $2$-image-bounded NFAs can be considered as a relaxation of the notion of a DFA by adding very limited nondeterminism. In particular, not every NFA with $|q \cdot x| \le 2$ for all $q \in Q$, $x \in \Sigma$ is $2$-image-bounded, not even when there is only one nondeterministic transition (that is, when there is only one pair $q \in Q$, $x \in \Sigma$ with $|q \cdot x| = 2$). This is shown by the following NFA over a one-letter alphabet:
\raisebox{-0.25\height}{\scalebox{.5}{\begin{tikzpicture} [node distance = 2cm]
\tikzset{every state/.style={inner sep=1pt,minimum size=1.5em}}
\node [state] at (0, 0) (1) {};
\node [state] at (2, 0) (2) {};
\node [] at (4, 0) (3) {$\ldots$};
\node [state] at (6, 0) (4) {};
\path [-stealth, thick]
(1) edge [] node[above] {} (2)
(2) edge [] node[above] {} (3)
(3) edge [] node[above] {} (4)
(4) edge [bend right=12] node[above] {} (1)

(1) edge [loop left] node[above] {} (1)
;
\end{tikzpicture}}}. This is not surprising, since having only one nondeterministic transition is a property of the actions of the letters only, while being $k$-image-bounded is a property of the actions of all words. In matrix terms, for a set $\MM$ of nonnegative matrices, the corresponding NFA is $k$-image-bounded if and only if each matrix in the monoid generated by $\MM$ has at most $k$ strictly positive entries in every~row.

We also remark on the differences between $k$-image-bounded NFAs and the classical notion of $k$-ambiguous NFAs, see e.g.~\cite{Weber1991}. First, $k$-ambiguity is defined for automata with initial and final states, while $k$-image-boundedness is a property of semi-automata. Second, finite ambiguity bounds the number of possible paths starting in the same state and labelled by the same word, while $k$-image-boundedness allows for any number of such paths as long as all of them end in at most $k$ different states. What is especially important in the context of this paper, $k$-image-boundedness still makes sense for strongly connected NFAs, while a $k$-ambiguous NFA that is strongly connected must be unambiguous (that is, $1$-ambiguous).

A simple analysis of the proofs of~\Cref{thm:nl-complete} and~\Cref{thm:nl-unamb} shows that the NFAs constructed in the reductions are $2$-image-bounded (intuitively, going from left to right in the pictures can result in an image of size at most two, and going from right to left means applying $c$, which shrinks the image to a singleton or the empty set). Hence, both theorems hold true when the NFAs are promised to belong to this class (it is easy to show that deciding $2$-image-boundedness is \NL-complete, even for strongly connected NFAs). Moreover, \Cref{thm:nl-complete} provides a tight bound in this case, as shown by the following result.

\begin{proposition}\label{prop:2-im-b-in-nl}
    Deciding if a $2$-image-bounded NFA is complete is in \NL.
\end{proposition}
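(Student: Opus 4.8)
The plan is to reduce completeness to a reachability question over subsets of $Q$ of size at most two, and to settle that question by a routine argument based on $\NL=\coNL$. For a set $S\subseteq Q$ write $S\cdot w=\bigcup_{q\in S}q\cdot w$, and recall that $\Aa$ is incomplete iff it admits a \emph{mortal} word $w$, i.e.\ one with $q\cdot w=\emptyset$ for all $q\in Q$. The only property of $2$-image-bounded NFAs I would use is that $|q\cdot w|\le 2$ for every state $q$ and every word $w$; combined with the identity $(q\cdot u)\cdot v=q\cdot(uv)$ this says that the image of a \emph{single} state remains a set of size at most two however we extend the word.

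I would work with the directed graph $H$ whose vertices are $\emptyset$, the singletons $\{q\}$, and the unordered pairs $\{q,q'\}$ ($\OO(|Q|^2)$ vertices in total), with an edge $S\to S\cdot x$ for every vertex $S$ and letter $x\in\Sigma$ with $|S\cdot x|\le 2$; given $\Aa$, this graph is constructible in logarithmic space. By the remark above, the vertices of $H$ reachable from a singleton $\{q\}$ are exactly the sets $q\cdot w$, $w\in\Sigma^*$, all of which have size at most two, so such walks never leave $H$; moreover $(q\cdot u)\cdot v=q\cdot(uv)$ is again of this form. Consequently the condition $(\star)$: ``for all $q\in Q$ and all $u\in\Sigma^*$ there is $v\in\Sigma^*$ with $(q\cdot u)\cdot v=\emptyset$'' is equivalent to: every vertex of $H$ reachable from a singleton can reach $\emptyset$ in $H$. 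The heart of the proof is the claim that $\Aa$ is incomplete iff $(\star)$ holds.

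The forward implication is immediate: if $w$ is mortal, then for all $q,u$ we have $(q\cdot u)\cdot w=\bigcup_{p\in q\cdot u}p\cdot w=\emptyset$, so the single word $w$ witnesses $(\star)$. For the converse I would build a mortal word greedily, in the spirit of the pair criterion for synchronising DFAs. Enumerate $Q=\{q_1,\dots,q_n\}$ and maintain a word $w$, initially empty, with $q_j\cdot w=\emptyset$ for all $j$ already processed. At step $i$ the set $q_i\cdot w$ has size at most two and is of the form $q_i\cdot(\text{prefix})$, so $(\star)$, applied with $q=q_i$ and $u=w$, supplies a word $v_i$ with $(q_i\cdot w)\cdot v_i=\emptyset$; replacing $w$ by $wv_i$ makes $q_i\cdot w=\emptyset$ while preserving $q_j\cdot w=\emptyset$ for $j<i$, because $\emptyset\cdot v_i=\emptyset$. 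After $n$ steps the accumulated word is mortal, so $\Aa$ is incomplete.

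It remains to decide $(\star)$ in \NL; since $(\star)$ is equivalent to incompleteness, its negation is equivalent to completeness. That negation directly reads as an \NL predicate: there exist a state $q$ and a vertex $S$ of $H$ such that $S$ is reachable from $\{q\}$ in $H$ but $S$ cannot reach $\emptyset$ in $H$. An \NL machine guesses $q$ and $S$ (using $\OO(\log|Q|)$ bits), verifies reachability of $S$ from $\{q\}$ in $H$ by guessing a path, and verifies non-reachability of $\emptyset$ from $S$ in $H$ using $\NL=\coNL$; the graph $H$ itself is produced on the fly in logarithmic space. Hence completeness of a $2$-image-bounded NFA is in \NL. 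The single point where $2$-image-boundedness is essential — and the main thing to get right — is the claim that images of individual states never exceed size two, which is exactly what keeps the relevant subsets, and thus the graph $H$, of polynomial size; for general NFAs these subsets blow up and the argument genuinely collapses, completeness then being \PSPACE-complete~\cite{Kao2009}.
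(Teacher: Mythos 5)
Your proof is correct and follows essentially the same approach as the paper's: a greedy concatenation argument showing that a local ``emptiability'' condition on size-$\le 2$ images yields a mortal word, followed by an \NL algorithm checking reachability of $\emptyset$ in a polynomial-size graph of small subsets, invoking $\NL=\coNL$ for the complementation. The paper phrases the intermediate criterion (its Lemma~\ref{lemma:2-im-b-completeness}) over arbitrary \emph{pairs} of states, so the corresponding reachability walk tracks unions of size up to four, whereas your condition $(\star)$ quantifies only over images of singletons and therefore tracks subsets of size at most two --- a slight tightening that does not change the complexity analysis.
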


The proof of this proposition relies on the following simple lemma.

\begin{lemma}\label{lemma:2-im-b-completeness}
    A $2$-image-bounded NFA $\Aa = (Q, \Sigma, \Delta)$ is incomplete if and only if for every pair $p, q \in Q$ of its states there exists a word $w$ such that $p \cdot w = q \cdot w = \emptyset$.
\end{lemma}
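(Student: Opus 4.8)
The plan is to prove both directions of the equivalence, using $2$-image-boundedness in an essential way.

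\textbf{The easy direction.} Suppose that for every pair $p, q \in Q$ there is a word $w_{p,q}$ with $p \cdot w_{p,q} = q \cdot w_{p,q} = \emptyset$. I want to produce a single mortal word. I would build it by merging the states one pair at a time. Start with any state $p_1$; pick a word $u_1$ killing $p_1$ (take $w_{p_1,p_1}$, or just note the single-state case follows from the pair case with $p=q$). Having a word $v$ that kills a set $S \subsetneq Q$, take any $q \notin S$ and... actually the cleaner approach: order $Q = \{q_1, \dots, q_n\}$ and argue that if $v$ kills $\{q_1, \dots, q_i\}$, then for the word $w_{q_i, q_{i+1}}$ (or rather, I should be careful — I want to kill $q_{i+1}$ without resurrecting the others). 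The point is that killing is monotone under extension: if $q \cdot v = \emptyset$ then $q \cdot vw = \emptyset$ for all $w$. So it suffices to kill every \emph{single} state, i.e. I only need $q \cdot w_q = \emptyset$ for each $q$, and then $w_{q_1} w_{q_2} \cdots w_{q_n}$ kills everything. So the easy direction really only uses the hypothesis in the case $p = q$, giving a mortal word directly; the full strength of the lemma lies in the converse.

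\textbf{The hard direction.} Suppose $\Aa$ is incomplete, so there is a mortal word $z$ with $q \cdot z = \emptyset$ for all $q \in Q$. Fix $p, q \in Q$; I must find $w$ with $p \cdot w = q \cdot w = \emptyset$. Consider the pair $\{p, q\}$ and read $z$ one letter at a time, tracking $T_i := (\{p,q\}) \cdot z_{[1..i]}$, the image of $\{p,q\}$ under the length-$i$ prefix of $z$. Since $\Aa$ is $2$-image-bounded, $|p \cdot z_{[1..i]}| \le 2$ and $|q \cdot z_{[1..i]}| \le 2$, but a priori $|T_i|$ could be up to $4$ — this is the subtlety. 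The key observation I would aim for: whenever $|T_i| \ge 3$, we can find a letter (a prefix extension) that shrinks things, or more directly: track instead the pair $(p \cdot z_{[1..i]}, q \cdot z_{[1..i]})$. Each of these two sets has size $\le 2$. Since $z$ is mortal, both become empty by the end. The real claim to nail down is: if $X = p \cdot u$ with $1 \le |X| \le 2$, and $z$ is mortal, then there is a word killing $X$ — indeed $X \cdot z = \bigcup_{x \in X} x \cdot z = \emptyset$ since $z$ kills every state. So $p \cdot (uz) = \emptyset$ simply because $z$ is mortal. That gives a word killing $p$ and, separately, a word killing $q$, but concatenation as in the easy direction then kills both. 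So actually the converse is also immediate once we observe that a mortal word kills \emph{every} state, hence every state individually admits a mortal word, hence (by monotonicity) every pair does.

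\textbf{Where $2$-image-boundedness is really used.} On reflection, the equivalence above holds for arbitrary NFAs by this argument, which suggests I have the statement's intended content slightly off — the substantive point must be that the characterisation yields an \NL\ algorithm (this is \Cref{prop:2-im-b-in-nl}), and $2$-image-boundedness is what makes the pairwise condition \emph{checkable} in \NL: one runs the standard reachability-on-pairs search, where a configuration is a pair of states (or the symbol $\emptyset$), each configuration is logarithmic-size, and $2$-image-boundedness guarantees that the image of a configuration under a letter is again representable as a pair (plus emptiness markers), so the whole search stays in nondeterministic logarithmic space. Thus the proof I would write is: (1) a mortal word kills every state, so incompleteness is equivalent to ``every singleton can be killed''; (2) by monotonicity of killing under word extension, ``every singleton can be killed'' is equivalent to ``every pair can be killed'' (concatenate the two killing words); (3) remark that for the converse direction within the lemma one takes, for a given pair $p,q$, the word $w = w_p w_q$ where $w_p, w_q$ kill $p, q$ respectively. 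The main obstacle is presentational rather than mathematical: making sure the bookkeeping in step (2) is stated cleanly, and flagging clearly that the $2$-image-bounded hypothesis is dormant in the lemma itself but becomes the crux in \Cref{prop:2-im-b-in-nl}.
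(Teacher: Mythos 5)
There is a genuine gap in your argument for the direction ``pairwise condition $\Rightarrow$ incomplete''. You claim the hypothesis is only needed for $p = q$, i.e.\ that it suffices to have, for each state $q_i$, a word $w_{q_i}$ killing it, and that the concatenation $w_{q_1} w_{q_2} \cdots w_{q_n}$ then kills all of $Q$. Monotonicity does not support this: it guarantees $q_1 \cdot w_{q_1} w_{q_2} = \emptyset$, but says nothing about $q_2 \cdot w_{q_1} w_{q_2}$, since $q_2 \cdot w_{q_1}$ is in general a different (up to two-element) set that $w_{q_2}$ has no reason to kill. A concrete counterexample to your step~(1) (``incompleteness is equivalent to every singleton can be killed'') and step~(2): take $Q = \{1,2\}$, $\Sigma = \{a,b\}$, with $1 \cdot a = \emptyset$, $2 \cdot a = \{1,2\}$, $1 \cdot b = \{1,2\}$, $2 \cdot b = \emptyset$. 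This NFA is $2$-image-bounded, $a$ kills $1$ and $b$ kills $2$, yet $\{1,2\} \cdot a = \{1,2\} \cdot b = \{1,2\}$, so it is complete; and correspondingly no word kills the \emph{pair} $\{1,2\}$, so the lemma's condition fails, as it must. This also refutes your remark that ``the equivalence above holds for arbitrary NFAs.''

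The paper's proof uses both the full pairwise hypothesis and $2$-image-boundedness in an interlocking way: after choosing $w_1$ killing $q_1$, the set $q_2 \cdot w_1$ has size at most~$2$ \emph{because the NFA is $2$-image-bounded}, so the pairwise hypothesis yields a $w_2$ killing that set, hence $q_2 \cdot w_1 w_2 = \emptyset$ (while $q_1 \cdot w_1 w_2 = \emptyset$ persists by monotonicity), and one iterates through $q_3, \dots, q_n$. The $2$-image-boundedness is not dormant in the lemma; it is precisely what keeps the intermediate images small enough for the pairwise hypothesis to bite. Your ``hard direction'' (incomplete $\Rightarrow$ pairwise) is fine and, as you note, trivial: a mortal word already kills both states of any pair.
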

\begin{proof}
    If $\Aa$ is incomplete, then by definition it admits a word that maps every state to the empty set.
    In the opposite direction, let $Q = \{q_1, \ldots, q_n\}$. Let $w_1$ be a word such that $q_1 \cdot w_1 = \emptyset$. Since $\Aa$ is $2$-image-bounded, we have that $|q_2 \cdot w_1| \le 2$, and by the assumption of the lemma there is $w_2$ with $q_2 \cdot w_1 w_2 = \emptyset$. Similarly, there are $w_3$ with $q_3 \cdot w_1 w_2 w_3 = \emptyset$, etc., and $w_n$ with $q_n \cdot w_1 \cdots w_n = \emptyset$.
    Define $w = w_1 \cdots w_n$.
    For all $q_i \in Q$ we have $q_i \cdot w = \emptyset$; i.e., $\Aa$ is incomplete.
\end{proof}

For a $2$-image-bounded NFA, we have $|p \cdot w \cup q \cdot w| \le 4$ for every pair $p, q \in Q$ of states and every word~$w$. Thus, \Cref{lemma:2-im-b-completeness} provides an algorithm for deciding if a given $2$-image-bounded NFA is complete in~\NL: for every pair $p, q$ of states, nondeterministically guess a word $w$ such that $p \cdot w = q \cdot w = \emptyset$. If all guesses are successful, the NFA is incomplete. This finishes the proof of~\Cref{prop:2-im-b-in-nl}. Collecting everything together, we get the following result.

\begin{theorem}
    Deciding if a binary strongly connected $2$-image-bounded unambiguous NFA is complete is \NL-complete.
\end{theorem}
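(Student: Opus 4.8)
The plan is to derive this statement by combining results already in hand rather than building anything new. For membership in \NL: an unambiguous $2$-image-bounded NFA is, in particular, $2$-image-bounded, so \Cref{prop:2-im-b-in-nl} --- and through it \Cref{lemma:2-im-b-completeness} --- applies directly and yields an \NL procedure. Nothing about unambiguity or binariness is needed for this direction.

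For \NL-hardness I would reuse the reduction in the proof of \Cref{thm:nl-complete}. That construction already turns an instance of the constrained reachability problem of \Cref{lem:st-reach-ass} into a binary strongly connected unambiguous NFA $\Aa$ that is complete if and only if $G$ has no $(s,t)$-path of length $n-1$. The only thing left to check is that $\Aa$ is also $2$-image-bounded, which is exactly the observation stated just before \Cref{prop:2-im-b-in-nl}; I would spell it out by tracking how the image of a singleton evolves. Before the alphabet reduction the only single-letter transitions of out-degree two are $f$ under $c$ (with $f \cdot c = \{s_{(2)}, p_1\}$) and the leaves of $T$ under $a$ and under $b$ (each to a pair $\{v_{(1)}, v_{(2)}\}$). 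Reading letters from $\{a,b\}$ keeps the image inside the deterministic parts $V_{(1)}$, $V_{(2)}$, the non-leaf part of $T$, and $P$, except when a tree path passes a leaf of $T$, after which the image lies in $V_{(1)} \cup V_{(2)}$ with at most one state in each copy; and reading $c$ sends every state to $\{f\}$ or $\emptyset$, collapsing any reachable image to size at most one, the sole exception being that $\{f\}$ maps to $\{s_{(2)}, p_1\}$. Since $f$ is entered only by a $c$-transition, and immediately after reading $c$ the image is one of $\{s_{(2)}, p_1\}$, $\{f\}$, $\emptyset$ --- none of which reaches $f$ again without a further $c$ --- the state $f$ never appears in a reachable image alongside another state, so $|q \cdot w| \le 2$ for all $q$ and $w$. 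Finally, the alphabet-reduction step of \Cref{thm:nl-complete} only relabels paths by length-two words over $\{x,y\}$ and routes each state's outgoing transitions through a fresh depth-two binary tree, hence it inserts at most one extra deterministic step through new states and preserves $2$-image-boundedness together with completeness, unambiguity, and strong connectivity. Thus the reduced NFA is a binary unambiguous $2$-image-bounded NFA that is complete iff the original instance has no $(s,t)$-path of length $n-1$.

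The only part that is not a one-line citation is this $2$-image-boundedness bookkeeping, and within it the slightly subtle point is verifying that $f$ is never pooled with other states in a reachable image and that the alphabet reduction does not spoil the bound; everything else follows verbatim from \Cref{thm:nl-complete} and \Cref{prop:2-im-b-in-nl}, and putting the two directions together gives \NL-completeness.
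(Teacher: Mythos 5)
Your proposal is correct and follows exactly the paper's intended argument: membership in \NL via \Cref{prop:2-im-b-in-nl}, and \NL-hardness by observing that the automaton built in the proof of \Cref{thm:nl-complete} is already $2$-image-bounded. The paper states the $2$-image-boundedness observation only as a one-line ``intuitive'' remark before \Cref{prop:2-im-b-in-nl}, so the bookkeeping you carry out (that $f$ never co-occurs with another state in a reachable image, and that the binary-alphabet encoding preserves the bound) is a sound elaboration of the same route rather than a different proof.
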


\comment{Let us now briefly remark on the time complexity of deciding completeness and unambiguity for $2$-image-bounded NFAs. The reason we are interested in it in this paper is that, similarly to how this class naturally emerges in the space complexity results regarding \NL-completeness, it also appears in time complexity results for quadratic time. To show hardness for quadratic time, much easier constructions suffice, but these constructions are somewhat similar to our~\NL-hardness constructions. The hardness reductions in the following theorem are similar to those in~\cite{Drabik2025}.

\begin{theorem}
    Given a binary $2$-image-bounded NFA $\Aa = (Q, \Sigma, \Delta)$, one can decide if it is complete and if it is unambiguous in time $\OO(|Q|^2)$. Assuming Strong Exponential-Time Hypothesis (SETH), for either of these problems there does not exist an algorithm in time $\OO(|Q|^{2 - \epsilon})$ for any $\epsilon > 0$.
\end{theorem}
\begin{proof}[Proof sketch]
It is well known that ambiguity of an NFA can be decided in quadratic time, see e.g.~\cite[Theorem 16]{Allauzen2011}. For deciding completeness of a $2$-image-bounded NFA, it is enough to note that the statement of~\Cref{lemma:2-im-b-completeness} can be strengthened by only require that such a word $w$ exists for coreachable pairs of states $p, q$, that is, such pairs $p, q$ that there is a state $s$ and a word $u$ with $s \cdot u = \{p, q\}$. Then one can construct in quadratic time the part of the power automaton of the NFA consisting only of singletons and coreachable pairs, and by a single graph search decide the strengthened requirement of~\Cref{lemma:2-im-b-completeness}.

Let us now show that subquadratic algorithms for deciding completeness and unambiguity do not exist assuming SETH. To prove that, we provide a linear-time reduction from deciding intersection emptiness of two binary DFA acceptors (2-IE)~\cite[Theorem 7.22]{Wehar2017}. A DFA acceptor is a DFA with a chosen initial state $s$ and a chosen set $F$ of final states. It accepts the language of words that label a path from the initial state to a final state. Given two DFA acceptors $\Aa_i = (Q_i, \{a, b\}, \delta_i, s_i, F_i)$, $i \in \{1, 2\}$, 2-IE asks if the intersection of the languages accepted by them is empty.

Given the input of 2-IE as above, we construct two DFAs $\Aa'_i = (Q'_i, \{a, b\}, \delta'_i)$, $i \in \{1, 2\}$ as follows. We will use $\Aa_1$ for showing the hardness of unambiguity, and $\Aa_2$ for showing the hardness of completeness, since both constructions are very similar.

Take $Q'_1 = Q'_2 = Q_1 \cup Q_2 \cup \{f\}$, where $f$ is a fresh state. Define $f \cdot x = \{s_1, s_2\}$ in both $\Aa'_1$ and~$\Aa'_2$. Define the action of $a$ and $b$ on $Q_1 \cup Q_2$ in the same way as it is defined in $\Aa_1$ and $\Aa_2$. Finally, in $\Aa'_1$ define $q \cdot c = \{f\}$ for every state $q \in F_1 \cup F_2$, and in $\Aa'_2$ define $q \cdot c = \{f\}$ for every state $q \not \in F_1 \cup F_2$. Both constructions are illustrated in~\Cref{fig:fine-grained}.

\begin{figure}[ht]\centering
\begin{subfigure}[c]{0.47\textwidth} \centering\begin{tikzpicture} [node distance = 2cm]
\tikzset{every state/.style={inner sep=1pt,minimum size=1.5em}}

\node [state] at (-0.5, 0) (f) {$f$};

\node [state] at (1, 1.25) (i1) {};
\draw (2.5,1.25) ellipse (2cm and 0.75cm);
\draw [black,fill=gray!20] (3.5,1.5) ellipse (0.5cm and 0.25cm);
\node  at (3.5,1.5) (f1) {};

\node  at (3.5,1) (u1) {};
\node  at (5,0.5) (u1p) {};

\node [state] at (1, -1.25) (i2) {};
\draw (2.5,-1.25) ellipse (2cm and 0.75cm);
\draw [black,fill=gray!20] (2.5,-1.5) ellipse (0.75cm and 0.25cm);
\node  at (2.5,-1.5) (f2) {};

\node  at (3,-1) (u2) {};
\node  at (4.5,-0.5) (u2p) {};

\path [-stealth, thick]

(f) edge [bend left=20] node[left] {$a, b$} (i1)
(f1) edge [bend left=10] node[below] {$c$} (f)

(f) edge [bend right=20] node[left] {$a, b$} (i2)
(f2) edge [bend right=10] node[above] {$c$} (f)

(u1) edge [bend right=30, dashed] node[below] {$c$} (u1p)
(u2) edge [bend left=30, dashed] node[above] {$c$} (u2p)
;
\end{tikzpicture}
\end{subfigure}
\begin{subfigure}[c]{0.47\textwidth} \centering
\begin{tikzpicture} [node distance = 2cm]
\tikzset{every state/.style={inner sep=1pt,minimum size=1.5em}}


\node [state] at (-0.5, 0) (f) {$f$};

\node [state] at (1, 1.25) (i1) {};
\draw (2.5,1.25) ellipse (2cm and 0.75cm);
\draw [black,fill=gray!20] (3.5,1.5) ellipse (0.5cm and 0.25cm);
\node  at (2.5,1.25) (f1) {};

\node  at (3.4,1.6) (u1) {};
\node  at (4.9,1.2) (u1p) {};

\node [state] at (1, -1.25) (i2) {};
\draw (2.5,-1.25) ellipse (2cm and 0.75cm);
\draw [black,fill=gray!20] (2.5,-1.5) ellipse (0.75cm and 0.25cm);
\node  at (3,-1) (f2) {};

\node  at (2.5,-1.5) (u2) {};
\node  at (5,-1.3) (u2p) {};

\path [-stealth, thick]

(f) edge [bend left=20] node[left] {$a, b$} (i1)
(f1) edge [bend left=10] node[below] {$c$} (f)

(f) edge [bend right=20] node[left] {$a, b$} (i2)
(f2) edge [bend right=10] node[above] {$c$} (f)

(f) edge [loop left] node[left] {$c$} (f)

(u1) edge [bend right=30, dashed] node[below] {$c$} (u1p)
(u2) edge [bend left=30, dashed] node[above] {$c$} (u2p)


;
\end{tikzpicture}
\end{subfigure}
\caption{The construction in the reduction from the proof of~\Cref{thm:nl-complete}. The sets of accepting states are greyed out, and undefined transitions are represented by dashed arrows.
}\label{fig:fine-grained}
\end{figure}

It is easy to see that both constructed NFAs are $2$-image-bounded, and the intersection of the languages accepted by $\Aa_1$ and $\Aa_2$ is empty if and only if $\Aa'_1$ is unambiguous if and only if $\Aa'_2$ is complete.
\end{proof}
}

\section{Applications to codes}\label{sec:codes}

In this section, we explain a classical correspondence between strongly connected automata and regular codes, and thus obtain complexity results for the latter. This correspondence is a central concept in the book~\cite{Berstel2010}, where it is discussed in much more detail. We refer to Chapters 2-4 of this book for the proofs of the statements mentioned below.

A \emph{code} $X$ over an alphabet $\Sigma$ is a set of words such that no word over~$\Sigma$ admits more than one factorisation over $X$. That is, $X$ is a code if and only if for any words $x_1, \ldots, x_n, y_1, \ldots, y_m \in X$ with $x_1 \cdots x_n = y_1 \cdots y_m$ we have $m = n$ and $x_i = y_i$ for all~$i$.

Let $\Aa = (Q, \Sigma, \Delta)$ be a strongly connected unambiguous NFA. For a state $q \in Q$, define the \emph{set of first return words of $q$} as the set of words labelling paths that start and end in~$q$, but do not visit $q$ in-between. It is easy to see that for any $q \in Q$, the set of first return words of $q$ in $\Aa$ is a code that is also a regular language (called simply a regular code below). Conversely, for each regular code $X$ there exists a strongly connected unambiguous NFA $\Aa$ such that $X$ is the set of first return words of a state $q$. Note that $\Aa$ is not unique, but all such NFAs inherit a lot of properties of the corresponding codes. Below we discuss two such properties, completeness and synchronisation.

A code $X$ over an alphabet $\Sigma$ is called \emph{complete} if every word appears in a sequence of codewords. Formally, this means that for every word $w \in \Sigma^*$ there exist words $u, v \in \Sigma^*$ such that $uwv \in X^*$.
 
Fix a regular code $X$ and an arbitrary strongly connected unambiguous NFA $\Aa$ such that~$X$ is the set of first return words of a state $q$ in $\Aa$. Then $X$ is complete if and only if $\Aa$ is complete. Hence, it is reasonable to assume that $X$ is represented by $\Aa$ in the input. Note that if one takes $q$ as the unique initial and accepting state, the resulting NFA recognises~$X^*$. \Cref{thm:nl-complete} implies that deciding if a given regular code is complete is \NL-hard, assuming the representation discussed above.

A \emph{prefix code} $X$ over an alphabet $\Sigma$ is a set of words such that no word in $X$ is a prefix of another word in $X$. It is easy to see that every prefix code is indeed a code. A prefix code~$X$ is \emph{maximal} if for every word $w \in \Sigma^*$ such that $w$ is a prefix of a word in $X$ and $w \not \in X$, we have that for every $x \in \Sigma$, $wx$ is a prefix of a word in $X$. 
A word $w$ is  \emph{synchronising} for a maximal prefix code $X$ if every word ending with $w$ is can be factorised over $X$. Formally, this means that for every word $u \in \Sigma^*$, $uw \in X^*$. 

In the same way as for regular codes, 
for each regular maximal prefix code $X$ one can put in correspondence a strongly connected total DFA $\Aa$. We have that $X$ is synchronising if and only if $\Aa$ is synchronising. Note that $\Aa$ in this correspondence is again not unique, but every $\Aa$ whose set of first return words is $X$ satisfies this property, and thus can be taken as a representation of $X$. \Cref{thm:nl-sync} implies that deciding if a given regular maximal prefix code is synchronising is \NL-complete, assuming such a representation.

Let us also briefly remark that completeness and unambiguity of regular languages that are defined directly by automata that accept these languages are studied in~\cite{Cho1992,Huynh1992}.

\section{Conclusions and open problems}\label{sec:conclusions}
The main open problem left by this paper is the precise complexity of deciding  completeness for unambiguous NFAs. Recall that it is \NL-hard and belongs to \DET. While the gap between these two classes is relatively small (in the sense that few computational complexity classes lie between them), the distinction between \NL-complete and \DET-complete problems is important. Intuitively speaking, \NL-complete problems can be considered as those that admit combinatorial characterisations (as happens, for example, for completeness of $2$-image-bounded NFAs), while \DET-complete problems are characterised in terms of linear~algebra.

Another interesting open problem of the same spirit is finding subclasses of NFAs where the considered problems are not \NL-hard. Such subclasses are likely to have new interesting properties and characterisation. In particular, one can ask for subclasses where a random word of polynomial length provides a yes-certificate with high probability. This would immediately imply that the corresponding reachability problem is in the randomised counterpart of \LL for such a subclass.

The class of $k$-image-bounded NFAs seems to be worth further investigation.
The property is natural, but we did not find it mentioned in the literature.
It is easy to see that an incomplete $k$-image-bounded NFA always admits a mortal word of length at most~$n^{k + 1}$. Is this bound tight?

\subsection*{Acknowledgements}

Andrew Ryzhikov is supported by Polish National Science Centre SONATA BIS-12 grant
number 2022/46/E/ST6/00230.

\bibliographystyle{alpha}
\bibliography{sync}

\end{document}